\newtheorem{theorem}{Theorem}[section]
\newtheorem{corollary}{Corollary}
\newtheorem{proposition}{Proposition}
\theoremstyle{definition}
\newtheorem{definition}[theorem]{Definition}
\newcommand{\Ac}{\ensuremath{\mathcal  A}}
\newcommand{\Cc}{\ensuremath{\mathcal  C}}
\newcommand{\Lc}{\ensuremath{\mathcal  L}}
\newcommand{\Pc}{\ensuremath{\mathcal  P}}
\newcommand{\cb}{\ensuremath{\mathbf{c}}}
\newcommand{\sbf}{\ensuremath{\mathbf{s}}}
\newcommand{\bF}{\ensuremath{\mathbb{F}}}
\newcommand{\bN}{\ensuremath{\mathbb{N}}}
\newcommand{\bS}{\ensuremath{\mathbb{S}}}
\newcommand{\gotha}{\ensuremath{\mathfrak{a}}}
\newcommand{\gothb}{\ensuremath{\mathfrak{b}}}
\newcommand{\gothc}{\ensuremath{\mathfrak{c}}}
\newcommand{\gothp}{\ensuremath{\mathfrak{p}}}
\newcommand{\gothq}{\ensuremath{\mathfrak{q}}}
\newcounter{rno}
\newenvironment{rlist}{
\begin{list}{{\normalfont(\roman{rno})}}
{
\setlength{\topsep}{0.25ex}
\usecounter{rno}
\setlength{\topsep}{0.75ex}
\setlength{\labelwidth}{3ex}
\setlength{\leftmargin}{4ex}
\setlength{\labelsep}{1ex}	
\setlength{\rightmargin}{0ex}
\setlength{\itemindent}{0ex}
\setlength{\parsep}{0ex}
\setlength{\itemsep}{0.5ex plus0.2ex minus0.1ex}
}
}{\end{list}}
\newcounter{arno}
\newenvironment{arlist}{
\begin{list}{{\normalfont(\arabic{arno})\hfill}}
{
\usecounter{arno}
\setlength{\topsep}{0.75ex}
\setlength{\labelwidth}{3ex}
\setlength{\leftmargin}{4ex}
\setlength{\labelsep}{1ex}	
\setlength{\rightmargin}{0ex}
\setlength{\itemindent}{0ex}
\setlength{\parsep}{0ex}
\setlength{\itemsep}{0.5ex plus0.2ex minus0.1ex}
}
}{\end{list}}
\title[Codes on Lattices for Random SAF Routing ]
      {Codes on Lattices for Random SAF Routing}
 \keywords{Constant weight codes, multiplicative lattices, primary decomposition, irreducible decomposition, noetherian rings, Johnson graphs, random networks, store-and-forward routing}
 \email{aghatak@ece.iisc.ernet.in}
\begin{document}
\maketitle

\centerline{\scshape Anirban Ghatak }
\medskip
{\footnotesize
 \centerline{Department of Electrical Communication Engineering}
   \centerline{Indian Institute of Science, Bangalore, India}
}

\medskip


\bigskip

\begin{abstract}
We present a construction of constant weight codes based on the unique decomposition of elements in lattices. The conditions for unique primary decomposition and unique irreducible decomposition in lattices are discussed and the connections with decomposition of ideals in Noetherian commutative rings established. The source alphabet in our proposed construction is a set of uniquely decomposable elements constructed from a chosen subset of irreducible or primary elements of the appropriate lattice. The distance function between two lattice elements is  based on the symmetric distance between sets of constituent elements. It is known that constructing such constant weight codes is equivalent to constructing a Johnson graph with appropriate parameters. Some bounds on the code sizes are also presented and a method to obtain codes of optimal size, utilizing the Johnson graph description of the codes, is discussed. As an application we show how these codes can be used for error and erasure correction in random networks employing store-and-forward (SAF) routing. 
\end{abstract}

\section{Introduction}
Constant weight codes and their bounds have been extensively studied in literature (\cite{Florence}, \cite{Vera}, \cite{EV} and the references therein). Binary constant weight codes have recently been used for error correction in random networks \cite{Maxgad}. The latter involves a choice of the subsets of a given finite set as the code. If the chosen subsets are all of the same cardinality, the resulting code is a constant weight code. This code may also be described as an association scheme, the Johnson scheme (\cite{Florence}, \cite{BCN}). The metric for this code may either be the general Hamming metric or the Johnson metric - as the ``distance" between any two codewords is an even number.\\
Properties of constant weight binary codes on power set lattices were discussed in \cite{MB}. In this paper we exhibit a similar construction of constant weight codes on the elements of \emph{multiplicative lattices} \cite{WD} - i.e. lattices on which a residuation operation and its associated multiplication are defined. If all the non-unit elements can be expressed as finite irredundant  of certain elements of the lattice (termed `primary' elements), the multiplicative lattice is called a \emph{Noether lattice} \cite{WD}, \cite{Dil}, \cite{Dilideal}. Noether lattices are lattice analogues of Noetherian rings - commutative rings with an ascending chain condition. Constant weight codes were constructed in \cite{AG} on the ideals of a Noetherian ring, which were suitable for network error correction in delay-free, acyclic random networks employing store and forward (SAF) routing. We now adapt this construction for Noether lattices and show that the ideal-based codes are a special case of the lattice codes.\\
Kendziorra and Schmidt \cite{KS} interpreted the subspace codes of K\"{o}tter and Kschischang \cite{KK} for error and erasure correction in random networks as codes constructed on the subspace lattice of a finite dimensional vector space over $ \bF_q $ (also treated in \cite{MB}). They further generalized this construction to the submodule lattices of arbitrary finite modules and computed some bounds on code sizes. Their method exploits the additive structure of modules and the unique linear decomposition of each element of a modular lattice in terms of the `independent' spanning set of the lattice. We focus on the multiplicative structure on the ideals of a ring and on the elements of lattices with unique decomposition with respect to meet. It can be shown in this context that the lattice of ideals of an arbitrary commutative ring is a modular lattice. Moreover, if the ring is Noetherian, the ideal lattice is a modular Noether lattice, with decomposition of elements in terms of primary elements. So the properties of codes constructed on multiplicative lattices in general and Noether lattices in particular, will be inherited by the codes based on the ideal structure of commutative rings.\\
Unlike the case of commutative rings, there exist separate conditions for the existence of unique irredundant decomposition of each lattice element in terms of irreducible elements \cite{Dil}, which again does not presuppose modularity. Hence for these lattices having unique irreducible decomposition, the same strategy can be used for construction of constant weight codes, albeit in terms of irreducible elements rather than the primary elements.\\
In the SAF routing strategy for data transmission in networks the intermediate nodes retransmit the received packets without any combination. For correcting packet loss, the existing techniques include ARQ and using erasure codes based on Reed-Solomon codes \cite{Non}, \cite{Xu}. Binary error-correcting codes were proposed in \cite{Maxgad} where a modified Hamming metric was defined to account for adversarial modifications in the network. A recent work \cite{Matroid} proposes codes for SAF where the source transmits constant dimensional flats of a matroid.\\
We propose a $q$-ary error correction scheme for SAF where the source alphabet is a collection of elements of a lattice with unique decomposition. We consider a single source unicast over a delay-free, acyclic network where no knowledge of the network topology is assumed other than having an upper bound on the in-degree at each node in the network. Hence the model is essentially the same as the random network SAF considered in \cite{Matroid}, \cite{Maxgad}. At each non-source node of the network a simple processing, equivalent to a binary search algorithm, is performed before retransmission. The upper bound on the number of incoming nodes is used to provide an estimate of the complexity of this search algorithm. Error correction is achieved in an adversarial model, i.e. we consider the case when corrupt packets are introduced at some network node or nodes and at the destination, the decoder uses a suitable metric to extract the information sent by the source.\\
In section \ref{sec:prelim} we give an overview of multiplicative lattices vis-$\grave{a}$-vis commutative rings. The next section deals with irredundant decompositions of elements in multiplicative lattices and lattices with unique irreducible decompositions. Section \ref{sec:constwt} contains a construction of constant weight codes with a suitable distance metric. Some upper bounds and a lower bound on the size of the codes are given in Section \ref{sec:bnd}. In Section \ref{sec:saf} constant weight lattice codes are used for error and erasure correction for SAF routing in random networks. We conclude with a discussion on the scope of the present scheme and possible improvements.

\section{Preliminary Concepts}\label{sec:prelim}
In this section we give an overview of multiplicative lattices and introduce some definitions and concepts. The conditions for the existence of irredundant primary decomposition in multiplicative lattices are similar to those  for the ideals in Noetherian commutative rings; hence we state the analogous concepts in the case of commutative rings for comparison.\\
\emph{Notation}: We use the standard notations $\bN,\,{\bF}_q$ for the set of natural numbers and the finite field with $q$ elements, respectively. For a set $A$, $\lvert A \rvert$ denotes the cardinality. The gothic letters $\gotha,\, \gothp$ etc. are used to denote the ideals of a ring. The ideal generated by an element $f$ of a ring $R$ is denoted by $(f)$; for instance, the unit ideal by $ (1) $. For two ideals $ \gotha,\, \gothb $ of a ring, $ \gotha +\gothb $ denotes the smallest ideal containing both $ \gotha $ and $ \gothb $. 
\subsection{Multiplicative Lattices}
To characterize multiplicative lattices, we first define the operation of residuation in lattices and list some relevant properties \cite{WD}.
\begin{definition}
Let $\Lc$ be a lattice with unit element $\mathbf{1}$. Then $\Lc$ is said to be \emph{residuated} if there exists a well-defined binary operation $x:y$ for all $x,y \in \Lc$ satisfying the following properties.
\begin{arlist}
\item $a,b \in \Lc \Rightarrow a:b \in \Lc$;
\item $ a:b = \mathbf{1} $ if and only if $ a\geq b $;
\item $ a\geq b \Rightarrow a:c \geq b:c $ and $c:a \leq c:b,\,\, c \in \Lc$;
\item $(a:b):c = (a:c):b, \,\, a,b,c \in \Lc$;
\item $(a \wedge b): c = (a:c)\wedge (b:c)$ and $c: (a\vee b) = (c:a)\wedge (c:b), \,\, a,b,c \in \Lc$.
\end{arlist}.
\end{definition}
\vspace{-3 mm}
In a residuated lattice, there exists an associated multiplication with the following condition: $\Lc$ is completely closed with respect to union and the products of the unions of any two sets of elements in $\Lc$ equals the union of the products of all pairs of elements in the sets. Hence in conjunction with multiplication, the operation of residuation in lattices is governed by the following defining relations:
\begin{rlist}
\item $a \geq (a:b)b\,$;    
\item If $a \geq xb$ then $a:b\geq x$.
\end{rlist}
There is an evident analogy between residuation in multiplicative lattices and the operation of ideal quotients in commutative rings: for ideals $\gotha$ and $\gothb$ in a commutative ring $R$, the ideal quotient $\gotha:\gothb := \{ x \in R \,\lvert \, {\gothb}x \subseteq \gotha \}$. A comparison of some definitions and concepts for multiplicative lattices with their analogues in commutative rings is presented in Table $1$. We conclude this subsection with two concepts related to primary decomposition of elements in lattices.
\begin{definition}
A meet of primary elements $ \bigwedge_i q_i $ is said to be \emph{simple} if $ q_i \ngeq \bigwedge_{j\neq i} q_j\,\, \forall \, i $. 
\end{definition}
\begin{definition}
A prime element $p$ is said to be an \emph{associated prime} of a primary element $q$ if $p\geq q$ and $p\geq b \Rightarrow q \geq b^s, \,\, s \in \bN$.
\end{definition}
\begin{table}\label{Comp}
\begin{tabular}{|l|l|}

\hline
$\mathbf{Multiplicative \,\, Lattice}$ & $\mathbf{Commutative\,\, Ring}$\\
\hline
Residuation & Ideal quotient\\
\hline
\emph{Irreducible element}: & \emph{Irreducible ideal}: \\
$c = d\wedge e \Rightarrow c = d $ or $c=e$ & $\gotha = \gothp \cap \gothq \Rightarrow \gotha =\gothp$ or $\gotha =\gothq$\\
\hline
\emph{Prime element}: & \emph{Prime ideal}:\\
$ p\geq ab \Rightarrow p\geq a $ or $ p\geq b $ & $ ab \in \gothp \Rightarrow a \in \gothp$ or $ b \in \gothp $\\
\hline
\emph{Primary element}: & \emph{Primary ideal}:\\
$ p\geq ab, p \ngeq a \Rightarrow p \geq b^s, \, s\in \bN $ & $ ab \in \gothq, a \notin \gothq \Rightarrow b^s \in \gothq,\, s \in \bN$\\
\hline
\end{tabular}
\vspace{1.5 mm}
\caption{Comparison of Lattices and Rings}
\end{table}
\subsection{Primary Decomposition in Noetherian Rings}
We present an overview of primary decomposition of ideals in Noetherian commutative rings and state some relevant results (\cite{Zariski}, \cite{Atiyah}). 
\begin{definition}

An \emph{ideal} $\gotha$ of a ring $R$ is a subset of $R$ which is an additive subgroup and satisfies: $R \gotha \subseteq \gotha $. The \emph{radical} of an ideal $\gotha$, denoted $\sqrt{\gotha}$, is the set: $\left \{ x \in R \,\lvert \, x^{n} \in \gotha, n \in \bN  \right \}$. If $\gotha = \sqrt{\gotha}$, then $\gotha$ is termed a radical ideal.

\end{definition}
The concepts of \emph{irreducible, prime} and \emph{primary} ideals in a commutative ring are summarised in Table $1$. Evidently, every prime ideal is primary, but the converse is not true. If some prime ideal $\gothp$ satisfies: $\gothp= \sqrt{\gothq}$ for a primary ideal $\gothq$, then $\gothq$ is said to be $\gothp$-\emph{primary}.\\
We now state the central theorem about the existence of primary decomposition of ideals in a Noetherian ring.
\begin{theorem}
\emph{(Noether-Lasker Theorem)} In a Noetherian ring, every ideal admits an irredundant representation as a finite intersection of primary ideals. 
\end{theorem}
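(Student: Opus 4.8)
The plan is to prove the theorem in two stages, both of which exploit the ascending chain condition (ACC) that defines a Noetherian ring. First I would establish that every ideal is a finite intersection of \emph{irreducible} ideals, and then that every irreducible ideal is in fact \emph{primary}; composing these two facts and discarding superfluous terms yields the desired irredundant primary representation.

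For the first stage I would argue by the standard ``maximal counterexample'' technique. Let $\Sigma$ be the collection of ideals of $R$ that cannot be written as a finite intersection of irreducible ideals, and suppose for contradiction that $\Sigma \neq \emptyset$. Since $R$ is Noetherian, $\Sigma$ has a maximal element $\gotha$. Now $\gotha$ itself is not irreducible (an irreducible ideal is trivially such an intersection), so we may write $\gotha = \gothb \cap \gothc$ with $\gotha \subsetneq \gothb$ and $\gotha \subsetneq \gothc$. Maximality forces $\gothb, \gothc \notin \Sigma$, so each is a finite intersection of irreducible ideals; but then so is $\gotha = \gothb \cap \gothc$, contradicting $\gotha \in \Sigma$. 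Hence $\Sigma = \emptyset$.

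For the second stage I would reduce to the case $\gotha = (0)$ by passing to the quotient $R/\gotha$ (which is again Noetherian and in which the zero ideal is irreducible). Assuming $(0)$ is irreducible, suppose $xy = 0$ with $y \neq 0$; the goal is to show $x^n = 0$ for some $n \in \bN$. Consider the ascending chain of annihilators $\big((0):(x)\big) \subseteq \big((0):(x^2)\big) \subseteq \cdots$, which stabilizes by ACC at some index $n$. A short computation then shows $(x^n) \cap (y) = (0)$: any element of the intersection has the form $a x^n$ and is annihilated by $x$ (because $xy = 0$), whence $a x^{n+1} = 0$, so $a$ lies in the stabilized annihilator $\big((0):(x^{n+1})\big) = \big((0):(x^n)\big)$ and therefore $a x^n = 0$. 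Since $(0)$ is irreducible and $(y) \neq (0)$, irreducibility forces $(x^n) = (0)$, i.e. $x^n = 0$, proving $(0)$ --- and hence $\gotha$ --- primary.

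Finally, combining the two stages exhibits every ideal as a finite intersection $\gotha = \bigcap_i \gothq_i$ of primary ideals; to make the representation irredundant I would repeatedly delete any $\gothq_i$ that already contains $\bigcap_{j \neq i} \gothq_j$, a process that terminates since there are finitely many terms. The main obstacle is the second stage: the passage from irreducibility to the primary property is the nonobvious step, and the crux is recognizing that the stabilization of the annihilator chain is exactly what produces the nontrivial factorization $(0) = (x^n) \cap (y)$ to which irreducibility can then be applied.
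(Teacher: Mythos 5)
Your argument is correct: it is the standard two-stage proof (Noetherian induction to write every ideal as a finite intersection of irreducible ideals, then the stabilizing chain of annihilators $\big((0):(x)\big) \subseteq \big((0):(x^2)\big) \subseteq \cdots$ to show irreducible implies primary, followed by pruning redundant components). The paper states the Noether--Lasker theorem without proof, deferring to its cited references on commutative algebra, and the proof given there is essentially the one you propose, so there is nothing to reconcile beyond the routine convention that ``every ideal'' means every proper ideal.
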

From the Noether-Lasker theorem it follows that in a Noetherian ring every ideal $\gotha \subseteq R$ can be expressed as $\gotha = \bigcap^{n}_{i=1} \gothq_i $ for some $n \in \bN$, where $\gothq_i$ are primary ideals such that $\gothq_i \nsubseteq {\bigcap}_{j \neq i} \gothq_j$ for all $1 \leq i \leq n$. For a radical ideal in an arbitrary ring, one can derive sharper results as given by the following theorem.
\begin{theorem}\label{th:rad}\cite{Zariski}
Let $R$ be a ring and $\gotha \subseteq R$ an ideal which admits an irredundant primary decomposition: $\gotha= \bigcap_{i} \gothq_{i}$. Then $\gotha$ is a radical ideal if and only if all $\gothq_{i}$ are prime ideals.
\end{theorem}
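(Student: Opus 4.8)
The plan rests on two standard facts: the radical commutes with finite intersections, $\sqrt{\bigcap_i \gothq_i} = \bigcap_i \sqrt{\gothq_i}$, and the radical $\gothp_i := \sqrt{\gothq_i}$ of a primary ideal is prime, with $\gothq_i$ itself prime precisely when $\gothq_i = \gothp_i$. Writing $\sqrt{\gotha} = \bigcap_i \gothp_i$, the reverse implication is then immediate: if every $\gothq_i$ is prime then $\gothq_i = \sqrt{\gothq_i}$, so $\gotha = \bigcap_i \gothq_i = \bigcap_i \sqrt{\gothq_i} = \sqrt{\gotha}$ and $\gotha$ is radical.

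For the forward implication I would argue by contraposition: assuming some component, say $\gothq_1$, is not prime, I will exhibit an element of $\sqrt{\gotha}\setminus\gotha$, so that $\gotha \neq \sqrt{\gotha}$. Since $\gothq_1$ is primary but not prime we have $\gothq_1 \subsetneq \gothp_1$, so I may fix $y \in \gothp_1 \setminus \gothq_1$; and by irredundancy of the decomposition, $\bigcap_{j\neq 1}\gothq_j \nsubseteq \gothq_1$, so I may also fix $b \in \bigcap_{j\neq 1}\gothq_j$ with $b \notin \gothq_1$.

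The construction then splits according to whether $b \in \gothp_1$. If $b \in \gothp_1$, then $b$ itself is the desired witness: it lies in $\gothp_1 = \sqrt{\gothq_1}$ and in $\sqrt{\gothq_j}$ for each $j \neq 1$, hence in $\bigcap_j \sqrt{\gothq_j} = \sqrt{\gotha}$, while $b \notin \gothq_1$ gives $b \notin \gotha$. If instead $b \notin \gothp_1$, I set $w := yb$; it lies in $\gothp_1$ (since $y$ does) and in each $\gothq_j$ with $j \neq 1$ (since $b$ does), so again $w \in \sqrt{\gotha}$, while the primary property of $\gothq_1$ forces $w = yb \notin \gothq_1$ (otherwise $y \in \gothq_1$ or $b \in \gothp_1$, both excluded), whence $w \notin \gotha$. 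In either case $\gotha$ fails to be radical, completing the contrapositive.

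The step I expect to be the genuine obstacle is precisely this last point. A direct attempt to prove $\gothp_i \subseteq \gothq_i$ by taking $y \in \gothp_i$ and multiplying by a suitable $a \in \bigcap_{j\neq i}\gothq_j \setminus \gothq_i$ stalls, because the defining property of a primary element only ever returns membership in the radical $\gothp_i$, reproducing the hypothesis $y \in \gothp_i$ instead of yielding $y \in \gothq_i$, so the argument becomes circular. The case analysis above is exactly what breaks the circularity: rather than trying to certify $y \in \gothq_i$, it produces a concrete witness to $\sqrt{\gotha} \neq \gotha$, using whether $b$ lies in or outside $\gothp_1$ to guarantee that the witness escapes $\gothq_1$.
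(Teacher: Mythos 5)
Your argument is correct, but there is nothing in the paper to compare it against: Theorem \ref{th:rad} is stated with a citation to Zariski--Samuel and no proof is given in the text. On its own merits the proposal holds up. The reverse implication is the standard computation using $\sqrt{\bigcap_i \gothq_i}=\bigcap_i\sqrt{\gothq_i}$ together with the fact that a primary ideal is prime exactly when it is radical. For the forward implication, your contrapositive witness construction is sound: with $y\in\gothp_1\setminus\gothq_1$ and $b\in\bigcap_{j\neq 1}\gothq_j\setminus\gothq_1$ (available by irredundancy), the element $b$ itself lands in $\sqrt{\gotha}\setminus\gotha$ when $b\in\gothp_1$, and otherwise $yb$ does, since $yb\in\gothq_1$ would force $y\in\gothq_1$ or $b\in\sqrt{\gothq_1}=\gothp_1$, both excluded. (One degenerate check worth noting: if the decomposition has a single component, the empty intersection $\bigcap_{j\neq 1}\gothq_j$ is the unit ideal, $b$ can be taken to be $1\notin\gothp_1$, and your second case still produces the witness $y$.) Your closing remark correctly diagnoses why the naive attempt to show $\gothp_i\subseteq\gothq_i$ directly is circular; a slightly slicker non-contrapositive variant observes that when $\gotha$ is radical one automatically has $b\notin\gothp_i$ (else $b\in\bigcap_j\sqrt{\gothq_j}=\gotha\subseteq\gothq_i$), after which $yb\in\gotha\subseteq\gothq_i$ and the primary property yield $y\in\gothq_i$ outright --- but this is the same mechanism as your case split, just reorganized.
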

From the preceding two theorems one can infer that in a Noetherian ring any radical ideal admits an irredundant decomposition where all the components are prime ideals. The uniqueness of this decomposition is established as follows.
\begin{definition}
Given an irredundant primary decomposition of an ideal $\gotha$ as $\gotha= \bigcap_{i} {\gothq}_{i}$, the prime ideals ${\gothp}_{i}= \sqrt{{\gothq}_{i}}$ are called the \emph{associated} prime ideals of the ideal  $\gotha$. The minimal elements among the set of associated prime ideals of $\gotha$ are called the \emph{isolated} prime ideals of $\gotha$; the others are called \emph{embedded} prime ideals.The primary components $\gothq_{i}$ corresponding to the isolated (embedded) prime ideals of $\gotha$ are called \emph{isolated} (\emph{embedded}) \emph{primary components} of $\gotha$.
\end{definition}
It can be proved \cite{Zariski} that in an arbitrary ring, for an ideal $\gotha$ with an irredundant primary decomposition, all the isolated primary components are uniquely determined by $\gotha$. Moreover, the radical of $\gotha$ is precisely the intersection of the isolated prime ideals of $\gotha$. As a consequence of the Noether-Lasker Theorem and Theorem \ref{th:rad}, it follows that we can uniquely identify every radical ideal of a Noetherian ring as a finite intersection of prime ideals.
\section{Unique Decomposition in Lattices}\label{sec:latdec}
We discuss the conditions for existence of unique decompositions in lattices in terms of prime elements and irreducible elements. Prime element decomposition is a consequence of irredundant primary decomposition in a large class of lattices with ascending chain condition and the relevant results are restatements of those pertaining to primary decomposition in Noetherian rings \cite{WD}. Next the necessary and sufficient conditions for the existence of unique irredundant decomposition in terms of irreducible elements in a lattice are stated and compared with the conditions for primary decomposition. 
\subsection{Primary Decomposition in Lattices}
In \cite{WD}, \emph{Noether lattices} were defined in connection with primary decomposition in multiplicative lattices as follows.
\begin{definition}
A lattice $\Lc$ is called a \emph{Noether lattice} if it satisfies the following conditions:
\begin{arlist}
\item $\Lc$ may be residuated;
\item $ \Lc $ has ascending chain condition;
\item Every irreducible element of $\Lc$ is primary.
\end{arlist} 
\end{definition}
It was observed in \cite{WD} that all the decomposition theorems proved for the ideals of a Noetherian commutative ring hold for Noether lattices. Specifically, every non-unit element of a Noether lattice can be expressed as a simple finite meet of primary elements, each associated with a different prime element. Moreover, the existence of such an irredundant primary decomposition was stated to be independent of modularity - an example of a non-modular Noether lattice was given in \cite{WD} (p.~$ 351 $). However, in a subsequent work \cite{Dilideal}, the definition of Noether lattices was modified to include modularity, alongwith a condition to ensure that all irreducible elements were primary. It was also remarked that the lattice of ideals of a Noetherian ring is a (modular) Noether lattice. But while the assumption of modularity was necessary to formulate the lattice analogues of deeper results in commutative ring theory, for the existence of irredundant primary decomposition in multiplicative lattices the first definition of a Noether lattice is sufficient. So for the remainder of the paper we stick to the original definition of Noether lattices and refer to the latter as \emph{modular} Noether lattices.\\
We now establish the lattice analogue of the unique decomposition of radical ideals in terms of prime ideals in a Noetherian ring. First we define the analogue of a radical ideal.
\begin{definition}
Let $ a \in \Lc $ where $ \Lc $ is a multiplicative lattice. The \emph{radical} of $ a $, denoted $ \sqrt{a} $, is defined as $ \sqrt{a}:= \bigwedge q $ for all $ q \in \Lc $ which satisfy: $ a \leq q $ and $ b \leq q $ for all $ b \in \Lc $ such that $ b^s \leq a $ for some $ s \in \bN $. If $ a = \sqrt{a} $, it is termed a \emph{radical element}.
\end{definition}
The following proposition is important in the context of unique decomposition in terms of prime elements.
\begin{proposition}\label{prop:radint}
The radical of a finite meet in a multiplicative lattice with ascending chain condition is the meet of the radicals of the elements.
\begin{proof}
If $ b \leq \sqrt{\bigwedge_i a_i} $, then we have $ b^s \leq \bigwedge_i a_i  $, i.e. $ b^s \leq a_i, \, \forall i \Rightarrow b \leq \sqrt{a_i},\, \forall i $. Therefore, $ \sqrt{\bigwedge_i a_i} \leq \bigwedge_i \sqrt{a_i} $. Conversely, if $ b \leq \bigwedge_i \sqrt{a_i},\,\, \exists \, s_i \in \bN $ such that $ b^{s_i} \leq a_i, \, \forall i$ and so, $ \exists \, s \in \bN $ (vide properties of residuation and multiplication in lattices, \cite{WD}, p.~$ 337 $) such that $ b^s \leq \bigwedge_i a_i $. Hence, $ b \leq  \sqrt{\bigwedge_i a_i} $ and the proof is complete.
\end{proof}
\end{proposition}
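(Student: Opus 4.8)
The plan is to prove the asserted equality $\sqrt{\bigwedge_i a_i}=\bigwedge_i\sqrt{a_i}$ by establishing the two opposite inequalities separately, using throughout the working characterization that $b\le\sqrt{a}$ holds if and only if $b^s\le a$ for some $s\in\bN$. This characterization is what the definition of the radical delivers: if $b^s\le a$ then $b$ is among the radicands, hence $b\le q$ for every $q$ in the defining meet and so $b\le\sqrt{a}$; the reverse direction (that an element below $\sqrt a$ already has a power below $a$) is the one that rests on the ascending chain condition, and I would invoke it in the form just stated.

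For the inclusion $\sqrt{\bigwedge_i a_i}\le\bigwedge_i\sqrt{a_i}$, I would take any $b\le\sqrt{\bigwedge_i a_i}$ (in particular $b=\sqrt{\bigwedge_i a_i}$ itself). By the characterization there is $s$ with $b^s\le\bigwedge_i a_i$, and since $\bigwedge_i a_i\le a_i$ for each $i$ we get $b^s\le a_i$, whence $b\le\sqrt{a_i}$ for every $i$. As $\bigwedge_i\sqrt{a_i}$ is the greatest lower bound of the family $\{\sqrt{a_i}\}$, this yields $b\le\bigwedge_i\sqrt{a_i}$, and the inequality follows.

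For the reverse inclusion $\bigwedge_i\sqrt{a_i}\le\sqrt{\bigwedge_i a_i}$, I would start from an arbitrary $b\le\bigwedge_i\sqrt{a_i}$, so that $b\le\sqrt{a_i}$ and hence $b^{s_i}\le a_i$ for each $i$ and some $s_i\in\bN$. The crux is to replace the family of exponents $(s_i)$ by a single exponent. Because the meet is \emph{finite}, I set $s=\max_i s_i$. The multiplicative-lattice axioms then guarantee that the powers of $b$ form a descending chain: monotonicity of the product together with $b\le\mathbf{1}$ gives $b\cdot x\le\mathbf{1}\cdot x=x$, so $b^{s}\le b^{s_i}\le a_i$ whenever $s\ge s_i$. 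Since $\bigwedge_i a_i$ is the greatest lower bound of the $a_i$, it follows that $b^s\le\bigwedge_i a_i$, and therefore $b\le\sqrt{\bigwedge_i a_i}$.

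The main obstacle is precisely this passage from the individual exponents $s_i$ to a uniform $s$, which is where the finiteness of the index set is essential (and where, in the general treatment, the cited properties of residuation and multiplication enter): without a bound on the $s_i$ one could not produce a single power of $b$ lying below the meet. Everything else reduces to the order-theoretic fact that $\bigwedge$ is an infimum, so I expect the two inclusions to go through routinely once the common-exponent step is in place.
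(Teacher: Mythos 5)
Your proof is correct and follows essentially the same route as the paper's: both directions are established via the characterization $b\le\sqrt{a}\iff b^{s}\le a$ for some $s\in\bN$, with the first inclusion coming from $\bigwedge_i a_i\le a_i$ and the second from passing to a common exponent. The only difference is that you make the common-exponent step explicit (taking $s=\max_i s_i$ and using $b\le\mathbf{1}$ plus monotonicity of the product to get $b^{s}\le b^{s_i}$), where the paper simply cites the relevant properties of residuation and multiplication from Ward--Dilworth.
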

From the definition of the radical of an element and that of associated prime elements of a primary element (\cite{WD}) it follows that the radical of a primary element is a minimal prime associated with it. Hence we have the following theorem.
\begin{theorem}\label{th:radprime}
In a Noether lattice, every radical element admits a unique simple decomposition as the meet of prime elements.
\end{theorem}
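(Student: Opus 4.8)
The plan is to transport the classical argument for radical ideals in a Noetherian ring (the combination of the Noether--Lasker theorem with Theorem \ref{th:rad}) into the lattice setting, using the primary decomposition available in a Noether lattice together with Proposition \ref{prop:radint}. For \emph{existence}, I would start from the primary decomposition guaranteed by \cite{WD}: write the radical element $a$ as a simple meet $a = \bigwedge_i q_i$ of primary elements with pairwise distinct associated primes $p_i = \sqrt{q_i}$. Since $a = \sqrt{a}$ and, by Proposition \ref{prop:radint}, $\sqrt{a} = \sqrt{\bigwedge_i q_i} = \bigwedge_i \sqrt{q_i} = \bigwedge_i p_i$ (using the remark preceding the theorem that the radical of a primary element is its minimal associated prime), we immediately obtain $a = \bigwedge_i p_i$, a meet of prime elements.

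To turn $a = \bigwedge_i p_i$ into a \emph{simple} decomposition I would pass to the minimal members of the set $\{p_i\}$, meaning those that are minimal among all prime elements $\geq a$. Each $p_i$ lies above some such minimal prime, so $p \wedge p' = p'$ whenever $p \geq p'$ shows that discarding the non-minimal $p_i$ leaves the meet unchanged; hence $a$ equals the meet of its minimal primes. This refined meet is simple: if a minimal prime $p$ satisfied $p \geq \bigwedge_{p' \neq p} p'$ over the remaining minimal primes, then (by the product--meet comparison below) $p$ would dominate one of them, contradicting minimality.

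For \emph{uniqueness}, the crux is to characterize the primes occurring in any simple meet $a = \bigwedge_i p_i$ intrinsically, as exactly the minimal primes over $a$; uniqueness is then automatic, since this characterization makes no reference to the chosen decomposition. The key technical ingredient is that in a multiplicative lattice the product lies below the meet, $\prod_i p_i \leq \bigwedge_i p_i$ (from monotonicity of multiplication and $xy \leq x \wedge y$, \cite{WD}), so that any prime $p \geq a = \bigwedge_i p_i \geq \prod_i p_i$ must, by the defining property of prime elements extended to finite products by induction, satisfy $p \geq p_i$ for some $i$. Combined with the simplicity hypothesis this forces each $p_i$ to be minimal over $a$, while conversely every minimal prime over $a$ dominates---hence equals---some $p_i$. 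Thus $\{p_i\}$ is necessarily the set of minimal primes of $a$.

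The step I expect to demand the most care is the product--meet comparison and the passage from ``a prime dominates a meet'' to ``a prime dominates a factor''. Unlike the ring case, where $\gothp \supseteq \bigcap_i \gotha_i \supseteq \prod_i \gotha_i$ is routine, here one must invoke the specific interaction between multiplication, residuation, and order in the lattice (the defining relations (i)--(ii) and monotonicity of multiplication) to justify $\prod_i p_i \leq \bigwedge_i p_i$ before the inductive prime argument can be applied. The secondary point to verify is that the simple primary decomposition genuinely descends to a \emph{simple} prime decomposition under the radical, that is, that replacing each $q_i$ by $p_i$ creates no redundancy; this is precisely where the minimality characterization established above does the work.
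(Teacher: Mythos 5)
Your argument is correct, and the existence half follows the paper exactly: take a simple primary decomposition $a=\bigwedge_i q_i$, apply Proposition \ref{prop:radint}, and use the fact that $\sqrt{q_i}$ is the associated prime $p_i$. Where you diverge is in the uniqueness half. The paper disposes of uniqueness in one sentence by citing Ward--Dilworth's theorem that the associated primes (and their number) in a simple primary decomposition of an element of a Noether lattice are uniquely determined; since a meet of primes is in particular a primary decomposition, uniqueness of the prime components follows at once. You instead give a self-contained argument: you show $\prod_i p_i \leq \bigwedge_i p_i$ (which does require knowing that multiplication is monotone, via distributivity over joins, and that the unit is the multiplicative identity so that $xy \leq x\wedge y$ -- worth flagging, since the paper only records this for its examples), extend primality to finite products by induction, and conclude that the components of any simple prime decomposition are exactly the minimal primes over $a$. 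This buys you two things the paper's proof does not make explicit: a decomposition-free characterization of the components, and a verification that the meet obtained after taking radicals is actually \emph{simple} -- the paper's proof silently assumes that $\bigwedge_i \sqrt{q_i}$ remains irredundant, which is exactly the point your passage to minimal primes handles. The cost is that your route leans on more of the multiplicative structure, whereas the paper's leans on a stronger imported theorem.
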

\begin{proof}
Let $ a \in \Lc $, where $ \Lc $ is a Noether lattice, have a simple primary decomposition: $ a = \bigwedge_i q_i $. Taking the radical on both sides and using Proposition \ref{prop:radint}, we have: $ \sqrt{a} = \bigwedge_i \sqrt{q_i} $. It is known (\cite{WD}, p.~$ 347 $) that the associated prime elements and the number of primary elements in a simple primary decomposition of an element in a Noether lattice are uniquely determined. Hence this is a unique simple decomposition of the radical element $ \sqrt{a} $ in terms of prime elements.
\end{proof}
A significant difference between Noether lattices and Noetherian commutative rings is that in the latter all irreducible ideals are primary as a consequence of acc on the ideals, whereas in the former irreducible elements are not primary per se. A set of sufficient conditions for irreducible elements (with respect to meet) to be primary is given in \cite{Dilideal} (Theorem $ 3.1 $) which includes modularity, acc, and the condition that every element be a union of so-called \emph{meet principal elements}. Moreover, the existence of unique irreducible decomposition in lattices cannot be treated as a subcase of primary decomposition and vice versa.
\subsection{Irreducible Decomposition in Lattices}
The question of the existence of unique decomposition in lattices in terms of irreducible (with respect to meet) elements was settled in \cite{Dil} where a subset of so-called \emph{Birkhoff lattices} was shown to meet the requirement. Birkhoff lattices satisfy the following property: If $a > a\wedge b$ (i.e. $a$ covers $a\wedge b$) then $a \vee b > b\,$; moreover, there exists a \emph{rank function} with the usual properties. These lattices were subsequently termed (upper) \emph{semimodular lattices} \cite{Birk, MS}. We now state the key theorem giving the necessary and sufficient conditions for the existence of unique irreducible decomposition.
\begin{theorem}\label{th:irred}\cite{Dil}
Let $ \Lc $ be a lattice with unit element in which every quotient lattice is of finite dimension. Then each element of $ \Lc $ is uniquely expressed as a meet of irreducibles if and only if $ \Lc $ is a Birkhoff lattice in which every modular sublattice is distributive. Then for every element $ b $ covering an element $ a $, there is exactly one irreducible $q$ such that $ q \ngeq b$, but $q\geq b'$ for all other $b'$ covering $a$. These irreducibles are the components of $a$.
\end{theorem}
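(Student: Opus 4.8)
The plan is to separate the two implications and first lay down the finite-length machinery they share. Since every quotient lattice $b/a=\{x:a\le x\le b\}$ is of finite dimension, each interval carries a dimension (rank) function and satisfies both chain conditions; in particular, for $a\neq\mathbf{1}$ the upper quotient $\mathbf{1}/a$ has finite length, so by induction on $\dim(\mathbf{1}/a)$ every element is a finite meet of meet-irreducibles (if $a=u\wedge v$ with $u,v>a$, apply the inductive hypothesis to $u,v$), and deleting superfluous terms produces an irredundant (simple) representation. Thus existence is automatic and the whole content is \emph{uniqueness}. The one structural fact I would isolate at the outset is that, in a lattice of finite length, $q\neq\mathbf{1}$ is meet-irreducible if and only if it has a unique cover $q^{+}$: two distinct covers would meet in $q$, while conversely any $u,v>q$ both dominate the unique cover, forcing $u\wedge v>q$. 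This reduces everything to the combinatorics of the covers of a fixed element.

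For sufficiency, assume $\Lc$ is a Birkhoff lattice in which every modular sublattice is distributive. The pivotal use of the hypothesis is that the covers of any $a$ are \emph{independent}: no cover $b$ lies below the join of the remaining covers. Indeed, a dependency $b\le b_1'\vee\cdots\vee b_k'$ (taken minimal) produces, via the Birkhoff covering property, a length-two interval $[a,m]$ with $m=b_1'\vee b_2'$ having at least three distinct atoms, i.e. a copy of the diamond $M_3$; being modular and non-distributive, this is excluded. Granting independence, set $c_b=\bigvee_{b'\neq b}b'$, which satisfies $c_b\ngeq b$, and choose $q(a,b)$ maximal with $q(a,b)\ge c_b$ and $q(a,b)\ngeq b$ (finite length supplies a maximal such element). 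Then $q(a,b)$ is irreducible, since $q(a,b)=u\wedge v$ with $u,v>q(a,b)$ would force $u,v\ge b$ by maximality and hence $q(a,b)\ge b$; and by construction $q(a,b)\ge b'$ for every other cover $b'$. One checks $\bigwedge_{b\succ a}q(a,b)=a$ (anything strictly above $a$ dominates some cover, which its component omits) and that this meet is irredundant. This exhibits the canonical decomposition and proves the supplementary statement. Uniqueness then follows by a matching argument: given any irredundant $a=\bigwedge_i r_i$, the element $\bigwedge_{j\neq i}r_j$ strictly exceeds $a$ and so dominates a cover $b_i$ with $b_i\nleq r_i$; independence of the covers forces $i\mapsto b_i$ to be a bijection onto the covers with each $r_i$ omitting exactly the single cover $b_i$, whence $r_i\ge c_{b_i}$ and, by the maximality defining $q(a,b_i)$, $r_i=q(a,b_i)$.

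For necessity I would argue contrapositively, manufacturing an element with two distinct irredundant decompositions whenever a hypothesis fails. If $\Lc$ is not Birkhoff, the failure of the covering property yields a pentagon $N_5$, and the two ``sides'' of the pentagon give a covering element two incomparable maximal irreducible approximations, hence two decompositions of the distinguished element. If $\Lc$ is Birkhoff but carries a non-distributive modular sublattice, then—since a modular lattice is distributive exactly when it omits $M_3$—there is a diamond with bottom $a_0$ whose three atoms supply three distinct two-term meets all equal to $a_0$; replacing each atom by a maximal irreducible of $\Lc$ lying above it, below the diamond's top, and omitting the opposite atom, these collapse to genuinely distinct irredundant decompositions of $a_0$.

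The hard part will be the interface between the two directions' local witnesses and global irreducibility. In sufficiency, the delicate points are deriving the $M_3$-configuration cleanly from a dependency among covers (here semimodularity is essential, as it makes the relevant join have rank two so that the cover interval has length two) and then pushing the bijection between covers and components through the matching argument. In necessity, the care lies in \emph{inflating} the atoms of an $M_3$ or $N_5$ sublattice to elements that are irreducible in all of $\Lc$—irreducibility being a property of $\Lc$ rather than of the sublattice—without destroying the incomparabilities that keep the two resulting decompositions distinct.
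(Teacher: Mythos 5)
The paper offers no proof of this statement: it is quoted verbatim from Dilworth \cite{Dil} as a known result, so your attempt has to be measured against Dilworth's original argument rather than anything in the text. Several pieces of your sketch are sound: existence of an irredundant meet-irreducible decomposition by induction on the length of $\mathbf{1}/a$; the equivalence, in finite length, of ``meet-irreducible'' with ``has a unique cover''; and, \emph{granted} that no cover of $a$ lies below the join of the remaining covers, the verification that your $q(a,b)$ (a maximal element above $c_b=\bigvee_{b'\neq b}b'$ and not above $b$) is meet-irreducible and that $\bigwedge_{b}q(a,b)=a$ irredundantly.

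The gaps are in exactly the places where the real work of \cite{Dil} lives. First, your derivation of independence of the covers handles only a dependency of length two: if the minimal relation is $b\le b_1'\vee\cdots\vee b_k'$ with $k\ge 3$, the interval $[a,b_1'\vee\cdots\vee b_k']$ has length up to $k$ and $b\nleq b_1'\vee b_2'$, so no three-atom, length-two interval (hence no $M_3$) is produced; what is actually needed is the full local-distributivity statement that $[a,a^{*}]$ is Boolean, where $a^{*}$ is the join of all covers of $a$. Second, and more seriously, the uniqueness step is asserted rather than proved. Independence of the covers does not show that a component $r_i$ of an \emph{arbitrary} irredundant decomposition lies above every cover of $a$ except one (a priori $r_i$ could omit two covers $b,b'$; the resulting configuration $\{a,b,b\vee b',r_i,r_i\vee b\}$ is a pentagon, which your hypotheses do not forbid, so a genuinely different argument through the Boolean structure of $[a,a^{*}]$ is required). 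Even granting $r_i\ge c_{b_i}$ and $r_i\ngeq b_i$, the conclusion $r_i=q(a,b_i)$ needs the ``exactly one irreducible with these properties'' clause of the theorem, which you have not established: you chose $q(a,b_i)$ as \emph{a} maximal element with those properties, and a smaller meet-irreducible with the same properties would yield a second irredundant decomposition rather than be ruled out by maximality. Finally, the necessity direction is a plan, not a proof: extracting two distinct irredundant decompositions in $\Lc$ from an $N_5$ or $M_3$ \emph{sublattice} requires replacing its atoms by elements irreducible in all of $\Lc$ while preserving the incomparabilities, and you name this difficulty without resolving it.
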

Therefore, in one sense the conditions for the existence of unique irreducible decomposition are more restrictive than those for irredundant primary decomposition as in the former case the modular sublattices have to be distributive. On the other hand, the existence of unique irreducible decomposition does not require that all irreducible elements be primary, which is a precondition for a Noether lattice - modular or not. This is illustrated by the first example in \cite{WD} (p.~$ 351 $), reproduced in Figure \ref{fig:Non-Noether}, of a lattice which has acc but is not a Noether lattice.
{\begin{figure}
\begin{center}
    \includegraphics[width=1.0 in]{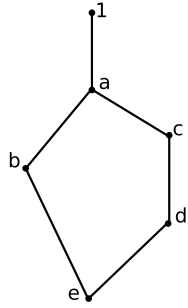}\\
  \caption{Non-Noether lattice with unique irreducible decomposition}\label{fig:Non-Noether}
  \end{center}
\end{figure}}
The lattice is evidently non-modular, as it has the standard five-element non-modular sublattice. The multiplicative relations, as defined in \cite{WD}, on the elements of this lattice may be summarised thus: $ xy = b $ for all $ x, y \in \{a, b \} $; $ xy = e $ for all $ x \in \{ c , d, e\} $ and $ y \in \{a, b, c, d, e \} $ and the unit element is the multiplicative identity. With this multiplication, not all irreducible elements of the lattice are primary: $ d > e = bc $ but $ d \ngeq c $ and $ d \ngeq b^s $ for any $  s \in \bN $ as $ b $ is idempotent. However, it is obvious that every element of the lattice has unique irreducible decomposition.\\
The ideal lattices of Noetherian rings, which are modular lattices, do not possess unique irreducible decomposition in general, as not all are distributive. So a ring-theoretic example is furnished by the ideal lattice of a \emph{Dedekind domain} \cite{Atiyah}, which is a one-dimensional, integrally closed Noetherian integral domain. Every non-zero prime ideal in a Dedekind domain is maximal (since one-dimensional) and every non-zero ideal can be expressed as a unique product of prime ideals. The maximality of every non-zero prime ideal makes it an irreducible element in the ideal lattice ($ \gothp = \gotha \wedge \gothb = \gotha \cap \gothb \Rightarrow \gothp = \gotha\, $ or $ \,\gothp = \gothb $); so the ideal lattice of a Dedekind domain has unique irreducible decomposition. In fact, a Dedekind domain provides a common example for both unique prime decomposition in a modular Noether lattice and unique irreducible decomposition. Hence the distributivity of the ideal lattice of a Dedekind domain, with the operations `$ \gotha \vee \gothb $' and `$ \gotha \wedge \gothb $' defined as $ \gotha +\gothb $  and $ \gotha \cap \gothb $ as outlined in the following proposition, is a direct consequence of Theorem \ref{th:irred}.
\begin{proposition}\label{prop:Dedom} \cite{Atiyah}
Let $\gotha , \, \gothb, \, \gothc  $ be non-zero ideals of a Dedekind domain $R$. Then the following distributive laws are satisfied:
\begin{rlist}
\item $ \gotha \cap (\gothb +\gothc) = (\gotha \cap \gothb) + (\gotha \cap \gothc)$
\item $ \gotha + (\gothb \cap \gothc) = (\gotha + \gothb) \cap (\gotha + \gothc)$
\end{rlist}
\end{proposition}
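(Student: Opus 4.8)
The plan is to obtain both identities at once by showing that the lattice of non-zero ideals of $R$ is \emph{distributive}, and then invoking the elementary fact that in any lattice the two distributive laws (i) and (ii) are equivalent, so that establishing either one yields the other. The route to distributivity is through Theorem \ref{th:irred}: the discussion preceding this proposition has already shown that, because every non-zero prime of a Dedekind domain is maximal and hence irreducible in the ideal lattice, and because every non-zero ideal factors uniquely into primes, this lattice has unique irreducible decomposition. Theorem \ref{th:irred} then forces it to be a Birkhoff lattice in which every modular sublattice is distributive.

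First I would verify the hypotheses of Theorem \ref{th:irred} on the sublattice $\Lc$ of non-zero ideals of $R$. The unique factorization of each non-zero ideal as a product of finitely many prime powers shows that every quotient (interval) sublattice $[\gotha,\gothb]$ is isomorphic to a finite product of finite chains, one chain for each of the finitely many primes occurring, and hence is of finite dimension; this is exactly the finite-dimensionality requirement of the theorem. Combined with the unique irreducible decomposition noted above, Theorem \ref{th:irred} applies and yields that every modular sublattice of $\Lc$ is distributive. The second step is the observation, recalled in the introduction, that the ideal lattice of any commutative ring is modular; in particular $\Lc$ is modular. Thus $\Lc$ is a modular sublattice of itself, and therefore distributive. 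Writing out distributivity with $\gotha\vee\gothb=\gotha+\gothb$ and $\gotha\wedge\gothb=\gotha\cap\gothb$ gives precisely the two laws (i) and (ii).

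The delicate point is not the algebra but the logical shape of Theorem \ref{th:irred}: the biconditional only guarantees distributivity of the \emph{modular} sublattices of a lattice with unique irreducible decomposition, not of the lattice as a whole. The proof therefore hinges on the separate input that the ambient ideal lattice is \emph{already} modular, which is what upgrades ``every modular sublattice is distributive'' to ``the lattice is distributive.'' As a sanity check, and as an alternative self-contained argument, one can bypass Theorem \ref{th:irred} entirely: writing $\gotha=\prod_\gothp \gothp^{a_\gothp}$, $\gothb=\prod_\gothp \gothp^{b_\gothp}$, $\gothc=\prod_\gothp \gothp^{c_\gothp}$, one has $\gotha+\gothb=\prod_\gothp \gothp^{\min(a_\gothp,b_\gothp)}$ and $\gotha\cap\gothb=\prod_\gothp \gothp^{\max(a_\gothp,b_\gothp)}$, so both identities reduce, prime by prime, to the distributivity of $(\bN,\min,\max)$, which is immediate. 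I expect the only genuine labor to be confirming these two formulas for sum and intersection from unique factorization.
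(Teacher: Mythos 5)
Your main argument is essentially the route the paper itself indicates: the proposition is stated without proof, citing Atiyah--Macdonald, immediately after the remark that the distributivity of the ideal lattice of a Dedekind domain is ``a direct consequence of Theorem~\ref{th:irred},'' and your proof fills in exactly that derivation, correctly supplying the one step the paper leaves implicit (that modularity of the ambient ideal lattice upgrades ``every modular sublattice is distributive'' to distributivity of the whole lattice). Your self-contained alternative via $\gotha+\gothb=\prod_{\gothp}\gothp^{\min(a_{\gothp},b_{\gothp})}$ and $\gotha\cap\gothb=\prod_{\gothp}\gothp^{\max(a_{\gothp},b_{\gothp})}$ is also correct and is closer to the standard textbook proof in the cited reference.
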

\section{Constant Weight Codes on Lattices with Unique Decomposition}\label{sec:constwt}
In this section we construct constant weight codes on elements of a lattice with unique decomposition with respect to meet. As discussed in Section \ref{sec:latdec}, there are two methods for unique decomposition of elements in lattices:
\begin{rlist}
\item decomposition of radical elements as unique meet of prime elements in a Noether lattice;
\item irreducible decomposition in a(n) (upper) semimodular lattice with all modular sublattices distributive.
\end{rlist} 
The proposed construction is applicable to either case and so, in the remainder of this paper, we refer to radical elements in a Noether lattice and elements of a lattice with irreducible decomposition as \emph{decomposable elements} and prime elements in the first case alongwith irreducible elements in the second as \emph{constituent elements}. 
\subsection{Encoding and Distance Metric}
We choose a finite subset $\Pc$ of the set of constituent elements of a suitable lattice $\Lc$ with unique decomposition, such that these elements do not divide one another. The source alphabet is denoted by: $\Ac = \left\{ 0, 1, \ldots, M-1 \right\}$, where each element of $\Ac$ represents a distinct meet of a finite number of elements in $\Pc$. The unique decomposition of elements in the lattice ensures that distinct combinations of constituent elements from the set $\Pc$ yield distinct decomposable elements. If each element of $\Ac$ is constructed by the meet of a constant number of elements in $\Pc$, we have a constant weight code based on $\Pc \subseteq \Lc$.\\
For instance, if a decomposable element $q$ has the following decomposition: $q = \bigwedge^{k}_{i=1} p_i$ where $p_i \in \Pc \,\,\forall \,\, i$, we can uniquely identify $q$ with the subset $\left \{ p_1, p_2, \cdots, p_k \right \}$ $ \subset \Pc$. If $\lvert \Pc\rvert = n$ and the chosen subsets are of the same cardinality, say $k$, the encoding procedure is equivalent to constructing a generalized Johnson graph $J(n,\,k,\, i)$ (where $k - i$ is the number of common places of any two adjacent `vertices') on the set $\Pc$, with the vertices corresponding to decomposable elements uniquely determined by each $k$-subset of constituent elements. Adjacent vertices of the graph correspond to codewords which are at minimum distance from each other. In this context we next define the distance between two codewords, which is the distance between two sets of constituent elements. The natural and obvious distance metric is the symmetric distance between arbitrary finite sets.
\begin{definition}
The \emph{symmetric distance} ${\rm d_s}\,(A,B)\,$ between arbitrary sets $A, \,B$ is defined as: ${\rm d_s}\,(A,B) = \lvert A \Delta B \rvert = \lvert (A\backslash\,B) \,\cup \,(B\backslash\,A)\rvert$.
\end{definition}
We now formally define a constant weight lattice code as follows.
\begin{definition}
Let $\Pc $ be a subset of the set of constituent elements of a lattice with unique (meet) decomposition $\Lc$. A code on the lattice is defined as a subset $\Cc$ of $ P_f (\Pc)$, where $ P_f (\Pc)$ is the set of all finite subsets of $\Pc$. If $\lvert \Pc \rvert = n$ and all the chosen elements of $\Cc$ are sets of cardinality $k$, $\Cc$ is a constant weight code, denoted by $\Cc (n,k)$, with each codeword of weight $k$.
\end{definition}
It follows that the minimum distance of a constant weight lattice code is given by: 
\begin{equation*}
\mathrm{d_{min}}(\Cc) = \mathrm{min}_{\begin{subarray} {c} 
c,c' \in \Cc \\ c \neq c'
\end{subarray}}\,\, {\rm d_s} (c,\, c') 
\end{equation*}
A code consisting of $k$-subsets of the set of chosen constituent elements $\Pc$, with $\lvert \Pc \rvert = n$, and minimum distance ${\rm{d_{min}}} = d$, will be termed an $(n, \,k,\, d)$ code.  
\subsection{Decoding}
In the absence of errors, the decoder receives a subset of constituent elements and reconstructs the transmitted lattice element by computing the meet. The decoder makes an estimate of the transmitted codeword by a minimum distance decoding rule. For instance, if $\sbf$ is the received subset, a minimum distance decoder will yield the following estimate: $\hat{\cb} = {\rm arg \, min}_{{\cb}' \in \Cc}\,{\rm d_s}(\sbf,\,{\cb}')$.
\subsection{Optimal Constant Weight Codes}
As discussed before, a set of finite subsets of constituent elements of a suitable lattice is chosen as a constant weight code. Hence a constant weight code so constructed may be viewed as a subgraph of a generalized Johnson graph $J(n,\,k,\, i)$ and the code has minimum distance $ {\rm{d_{min}}} = 2i $; it follows that a \emph{maximal} subgraph in the generalized Johnson graph will be an optimal constant weight code. A \emph{clique} in a graph is a complete subgraph which is not contained in any other complete subgraph. Intuitively, an optimal code on a generalized Johnson graph should contain more members than a clique, as all the members of a clique are exactly at minimum distance from each other. However, it turns out that (Corollary \ref{cor:mclique}) for certain parameters an optimal code on the generalized Johnson graph can be realized as a maximal clique. Some optimal codes have been found for small parameters by computer search implementing a variant of the well-known Bron-Kerbosch algorithm \cite{BK}. For instance, an optimal $(8,\,6,\,4)$ constant weight code is a maximal subgraph in the generalized Johnson graph $J(8,\,6,\, 2)$ comprising distinct $ 6 $-element subsets of a set of $ 8 $ elements such that every pair of distinct subsets differ in at least $ 2 $ places. There are $ 105 $ maximal subsets of size $ 4 $; choosing any of them will give an optimal $(8,\,6,\,4)$ constant weight code. For example, labelling the set of constituent elements as $ p_0,\, p_1, \cdots ,\, p_6, \, p_7 $, a constant weight code is given by $\lbrace \lbrace p_0,\, p_1,\, p_2, \, p_3, \, p_4,\, p_6 \rbrace , \, \lbrace p_0,\, p_1,\, p_2, \, p_5, \, p_6,\, p_7 \rbrace , \,\lbrace p_0,\, p_1,\, p_3, \, p_4, \, p_5,\, p_7 \rbrace ,$ $ \lbrace p_2,\, p_3,\, p_4, \, p_5, \, p_6,\, p_7 \rbrace \rbrace$.
\subsection{Illustrative Code on the Lattice of Ideals of a Dedekind Domain}
It was discussed in Section \ref{sec:latdec} that the ideal lattice of a Noetherian commutative ring is an example of a modular Noether lattice and the primary decomposition theorems of ideals can also be viewed as decomposition of arbitrary lattice elements as meets of primary elements. However for the existence of unique decomposition in terms of irreducible elements we require that all modular sublattices be distributive - a condition fulfilled in the ideal lattice of a Dedekind domain. We construct a constant weight lattice code based on irreducible decomposition of elements in the ideal lattice of a Dedekind domain. This can be dually viewed as  an example of decomposition in terms of primary elements as a Dedekind domain is a one-dimensional, integrally closed Noetherian domain.\\
Polynomial rings in one indeterminate over fields are \emph{principal ideal domains} (PID), and so, natural examples of Dedekind domains. We construct a $(7,\,4,\,4)$ constant weight code based on a subset $\Pc$ of the prime spectrum of $\bF_2 [X]$. Prime ideals in any unique factorization domain (and hence, in any PID) are generated by irreducible elements; in the particular case of a polynomial ring over a field, prime ideals are generated by the irreducible polynomials. Hence prime ideal decomposition in this case is an example of irreducible decomposition on the ideal lattice.\\
Consider the following set of irreducible polynomials in $\bF_2 [X]$: \\
$f_1 = X^2 + X + 1,\, f_2 = X^3 + X +1, \, f_3 = X^3 + X^2 +1, \, f_4 = X^4 +X+1, \, f_5 = X^4 + X^3 +1,\, f_6 = X^5 + X^2  +1,\, f_7 = X^6 + X +1$.\\
Define the set $\Pc := \left \{ \gothp_1,\, \gothp_2,\, \gothp_3,\, \gothp_4,\, \gothp_5,\, \gothp_6,\, \gothp_7 \right \}$, where $\gothp_i = (f_i)$, the ideal generated by the irreducible polynomial $f_i,\,\, i = 1,\,2,\,\cdots,\, 7$. A $(7,\,4,\,4)$ code on $\Pc$ will consist of a collection of sets of $4$ distinct prime ideals such that any two sets differ in at least two elements. For ease of notation, we represent each prime ideal by its index; for instance, the set $\left \{ \gothp_1,\, \gothp_2,\, \gothp_3,\, \gothp_4 \right \}$ will be denoted by $\left \{1,\,2,\,3,\,4 \right \}$. A possible optimal $(7,\,4,\,4)$ code is the following collection of sets of prime ideals: $\left \{1,\,2,\,3,\,6 \right \},\, \left \{1,\,2,\,4,\,5 \right \},\, \left \{1,\,3,\,4,\,7\right \},\, \left \{1,\,5,\,6,\,7 \right \}$, $\lbrace 2,\,3,\, 5,\,7 \rbrace$, $\left \{2,\,4,\, 6,\,7\right \},$ $ \left \{3,\,4,\, 5,\, 6 \right \}$.\\
The source alphabet will consist of those radical ideals of $\bF_2 [X]$ which correspond to the intersection of the sets of prime ideals constituting the codewords. In the present example, where the prime ideals are generated by non-associate irreducible polynomials, i.e. which are not multiples of one another by unit elements, one can identify each radical ideal of the source alphabet with a distinct polynomial in $\bF_2 [X]$. The above assertion is a consequence of irreducible decomposition in the ideal lattice of Dedekind domains as stated in the following proposition.
\begin{proposition}\cite{AG}
Let $R$ be a polynomial ring over a field and let $\gothp_1, \gothp_2, \cdots, \gothp_n$ be prime ideals of $R$ such that $\gothp_i = (f_i), \, f_i \in R,\, i=1,\,2,\cdots,\, n, $ are non-associate irreducible polynomials. Then the radical ideal $\gotha = \bigcap^{n}_{i=1} \gothp_i$ is generated by the polynomial $f = \prod^{n}_{i} f_i$.
\end{proposition}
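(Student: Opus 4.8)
The plan is to exploit the fact that a polynomial ring $R$ in one indeterminate over a field is a principal ideal domain, and hence a unique factorization domain, so that membership in an intersection of principal ideals is governed purely by divisibility. First I would record the elementary characterization: for any $g \in R$ we have $g \in \gotha = \bigcap_{i=1}^{n} \gothp_i$ if and only if $g \in (f_i)$ for every $i$, i.e.\ if and only if $f_i \mid g$ for all $i$. With this in hand, the reverse inclusion $(f) \subseteq \gotha$ is immediate, since $f = \prod_{i=1}^{n} f_i$ is divisible by each $f_i$ and therefore lies in every $\gothp_i$.

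The substantive direction is $\gotha \subseteq (f)$. Here I would use that the $f_i$, being \emph{non-associate} irreducible elements of a UFD, are pairwise coprime. The key lemma is the standard UFD fact that if pairwise coprime elements $f_1, \ldots, f_n$ each divide $g$, then their product $\prod_{i=1}^{n} f_i$ also divides $g$; equivalently, in a PID the intersection of the principal ideals $(f_i)$ is generated by their least common multiple, and the lcm of pairwise coprime elements is, up to a unit, their product. Either formulation yields $f \mid g$ for every $g \in \gotha$, hence $\gotha \subseteq (f)$. Combining the two inclusions gives $\gotha = (f)$, as claimed. I would also note in passing that, since the $f_i$ are distinct irreducibles, $f$ is squarefree and $(f)$ is therefore a radical ideal, consistent with $\gotha$ being an intersection of the (maximal, hence prime) ideals $\gothp_i$.

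The main obstacle --- indeed essentially the only nontrivial point --- is the coprimality-to-product step. I would justify it by comparing exponents in the unique factorizations: for $g \in \gotha$, the hypothesis $f_i \mid g$ forces the exponent of each distinct irreducible $f_i$ in $g$ to be at least one, so the product $\prod_{i=1}^{n} f_i$ appears in the factorization of $g$ and thus divides it. It is worth observing that this argument requires only the UFD property of $R$; the stronger PID property is what additionally guarantees that $\gotha$ is itself principal, matching the displayed conclusion that $\gotha$ is \emph{generated by} the single polynomial $f$.
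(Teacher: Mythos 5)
Your proof is correct, and it is the standard argument: both inclusions follow from the divisibility characterization of membership in principal ideals, with the nontrivial direction handled by comparing exponents of pairwise non-associate irreducibles in the unique factorization. The paper itself gives no proof of this proposition (it is quoted from the cited reference \cite{AG}), so there is nothing to contrast with; I would only note that your closing remark slightly overstates the role of the PID hypothesis, since the UFD argument you give already exhibits $\gotha$ as the principal ideal $(f)$ without invoking that every ideal of $R$ is principal.
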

We finally describe the radical ideals which constitute the source alphabet for the $(7,\,4,\,4)$ code given above in terms of their generating polynomials. We write the generating polynomials in the form of hexadecimal representation of their coefficients expressed as a binary string. For example, the generating polynomial for the radical ideal encoded as $\lbrace 1,\,2,\,3,\,6 \rbrace$, given by $X^{13} + X^{11} + X^9 + X^8 + X^5 + X^3 + 1$, has the binary string $ 0010101100101001 $ and is represented as $2B29$. Hence we have: $\Ac = \lbrace (2B29), \, (2E7B), \, (93BD), \, (23D75),\, (144B1),\, (5F237), \, (17153)\rbrace$.
\section{Bounds on the Code Size}\label{sec:bnd}
The proposed constant weight code based on elements of lattices with unique decomposition can be described as a Johnson scheme on a subset of the constituent elements of the lattice. Bounds on the sizes of such constant weight codes are available in literature (\cite{Florence}-\cite{EV}, \cite{SMJ}, \cite{E1}). We state some of these, namely, the sphere-covering \& packing, Singleton and Johnson bounds, in the context of constant weight lattice codes. Our results are, as expected from the discussion in Section \ref{sec:latdec}, mostly restatements of the said bounds derived for constant weight codes constructed on the prime ideals of a Noetherian ring in \cite{AG}, in the generalized setting of lattices with unique decomposition. As such, they share the same relation with the K\"{o}tter-Kschischang \cite{KK} constant dimensional subspace codes as the codes in \cite{AG}, since the Grassmann association scheme defined over the finite field $\bF_{q}$ is a $q$-analogue of the Johnson scheme \cite{BCN}.
\subsection{Sphere-Packing and Sphere-Covering Bounds}
For these bounds, we first define a sphere in the context of constant weight codes on lattices. As before, a code $\Cc (n,\,k,\, d)$ of weight $k$ and minimum distance $d$ is defined on a set $\Pc $ of cardinality $n$; each codeword is a $k$-subset of constituent elements belonging to $\Pc_k$, the set of $ k $-subsets of $ \Pc $.
\begin{definition}
A sphere ${\bS}(\sbf,\,k,\,r)$ of radius $r$ centered at a set $\sbf\in \Pc_k$ is defined as the set of all $k$-subsets of constituent elements ${\sbf}'\in \Pc_k$ such that ${\rm d_s}(\sbf,\, {\sbf}') \leq 2r$.
\end{definition}
The number of $k$-subsets of constituent elements in ${\bS}(\sbf,\,k,\,r) $, where $ k, (n-k) \geq r$, is: $\lvert{\bS}(\sbf,\,k,\,r) \rvert = \sum^{r}_{i=0} \binom{k}{i} \binom{n-k}{i}$. It follows that $\lvert {\bS}(\sbf,\,k,\,r)\rvert$ is independent of the choice of $\sbf \in \Pc_k$ and $\lvert {\bS}(k,\,r)\rvert = \lvert {\bS}(n - k,\,r)\rvert$. We now state the sphere-packing and sphere-covering bounds for the constant weight lattice codes.
\begin{theorem}
Let $\Cc(n,\,k,\, d)$ be defined as before, with ${\rm d_{min}}(\Cc)\, \geq 2r\,$ and let $t = \lfloor \frac {r - 1}{2} \rfloor$. Then
\begin{equation*}
\lvert\Cc(n,\,k,\, d)\rvert \leq \dfrac{\binom {n}{k}}{\lvert {\bS}(k,\,t)\rvert} = \dfrac{\binom {n}{k}}{\sum^{t}_{i=0} \binom{k}{i} \binom{n-k}{i}} 
\end{equation*}
Conversely, there exists a code $\Cc (n,\,k,\, d)$ with  ${\rm d_{min}}(\Cc )\, \geq d\,$ with
\begin{equation*}
 \lvert \Cc \rvert \,\geq \, \dfrac{\binom {n}{k}}{\lvert {\bS}(k,\,t+1)\rvert} = \dfrac{\binom {n}{k}}{\sum^{t+1}_{i=0} \binom{k}{i} \binom{n-k}{i}} 
\end{equation*}
 \end{theorem}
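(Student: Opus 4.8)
The plan is to prove the two inequalities by the classical sphere-packing and sphere-covering (Gilbert--Varshamov) arguments, carried out on the space $\Pc_k$ of $k$-subsets of $\Pc$ equipped with the symmetric distance ${\rm d_s}$. The one structural fact I need at the outset is that ${\rm d_s}$ is a genuine metric on $\Pc_k$; in particular the triangle inequality ${\rm d_s}(A,C)\le {\rm d_s}(A,B)+{\rm d_s}(B,C)$ follows at once from $A\Delta C\subseteq(A\Delta B)\cup(B\Delta C)$. I will also use the two facts recorded just before the statement: that $\lvert\bS(\sbf,k,r)\rvert=\sum_{i=0}^{r}\binom{k}{i}\binom{n-k}{i}$ independently of the centre $\sbf$, and that $\Pc_k$ has exactly $\binom{n}{k}$ elements. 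It is convenient to keep in mind throughout that ${\rm d_s}$ takes only even values on $\Pc_k$, which is what makes the floor defining $t=\lfloor(r-1)/2\rfloor$ interact cleanly with the radius $2r$.

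For the upper bound I would first show that the spheres $\bS(c,k,t)$ centred at distinct codewords $c,c'\in\Cc$ are pairwise disjoint. Suppose some $x$ lay in both; then ${\rm d_s}(c,x)\le 2t$ and ${\rm d_s}(c',x)\le 2t$, so the triangle inequality gives ${\rm d_s}(c,c')\le 4t$. Since $2t\le r-1$ yields $4t\le 2r-2<2r\le{\rm d_{min}}(\Cc)$, this contradicts the minimum-distance hypothesis, so no such $x$ exists. The disjoint spheres are all contained in $\Pc_k$, whence $\lvert\Cc\rvert\cdot\lvert\bS(k,t)\rvert\le\binom{n}{k}$, and this rearranges to the stated packing bound.

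For the lower bound I would exhibit the code by a maximality (greedy) argument. Build $\Cc\subseteq\Pc_k$ by adding $k$-subsets one at a time, retaining a candidate only if it remains at ${\rm d_s}$-distance $\ge d$ from every codeword chosen so far, until no further subset can be added. By maximality every $x\in\Pc_k$ fails the admissibility test, i.e.\ satisfies ${\rm d_s}(x,c)\le d-1$ for some $c\in\Cc$; using that ${\rm d_s}$ is even, this places $x$ in the sphere of radius $\lfloor(d-1)/2\rfloor$ about $c$, and one checks that this radius equals $t+1$. The spheres $\{\bS(c,k,t+1)\}_{c\in\Cc}$ therefore cover $\Pc_k$, giving $\binom{n}{k}\le\lvert\Cc\rvert\cdot\lvert\bS(k,t+1)\rvert$ and hence the claimed covering bound.

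The main obstacle --- indeed the only place where care is genuinely required --- is the radius bookkeeping that ties the minimum-distance parameters $2r$ and $d$ to the sphere radii $t$ and $t+1$. The packing half hinges on the sharp inequality $4t<2r$, which must be verified against the floor defining $t$ and against the even-valuedness of ${\rm d_s}$; the covering half hinges on translating ``within distance $d-1$'' into ``within sphere-radius $t+1$'', i.e.\ on identifying $\lfloor(d-1)/2\rfloor$ with $t+1$, which again exploits the even-valuedness of ${\rm d_s}$ so that the odd threshold $d-1$ collapses to the intended even radius. Once these two radius identifications are pinned down, both estimates reduce to pure counting against the fixed sphere cardinality $\lvert\bS(k,\cdot)\rvert$ and the total $\binom{n}{k}$, and no further calculation is needed.
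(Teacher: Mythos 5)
The paper states this theorem without proof (it is offered as a restatement of the classical sphere-packing and Gilbert--Varshamov bounds), so there is no internal argument to compare against; I assess your proof on its own terms. Your packing half is correct: the triangle inequality for $\mathrm{d_s}$, the disjointness of the radius-$t$ spheres via $4t\le 2r-2<2r\le \mathrm{d_{min}}(\Cc)$, and the final count against $\binom{n}{k}$ are all sound (modulo the standing assumption $k,\,n-k\ge t$ needed for the sphere-size formula).

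The covering half has a genuine gap, and it sits exactly at the step you yourself flagged: the claim that the radius $\lfloor (d-1)/2\rfloor$ arising from maximality equals $t+1$. Under the natural reading $d=2r$, maximality of the greedy code gives $\mathrm{d_s}(x,c)\le d-1$ for some codeword $c$, hence by evenness $\mathrm{d_s}(x,c)\le 2(r-1)$, i.e.\ $x\in\bS(c,k,r-1)$. But $t+1=\lfloor (r-1)/2\rfloor+1=\lfloor (r+1)/2\rfloor$, and $r-1=\lfloor(r+1)/2\rfloor$ only for $r\in\{2,3\}$; for $r\ge 4$ one has $r-1>t+1$, so the radius-$(t+1)$ spheres need not cover $\Pc_k$ and the inequality $\binom{n}{k}\le\lvert\Cc\rvert\cdot\lvert\bS(k,t+1)\rvert$ does not follow. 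Your argument only yields the weaker $\lvert\Cc\rvert\ge\binom{n}{k}/\lvert\bS(k,r-1)\rvert$. Indeed, the stronger bound as printed appears to be false in general: for $n=14$, $k=4$, $d=2r=8$ (so $t=1$, $t+1=2$), it would force a code of size at least $1001/311>3$, i.e.\ at least $4$ pairwise disjoint $4$-subsets of a $14$-element set, whereas only $3$ exist. The correct conclusion of your greedy argument is the Gilbert--Varshamov bound with denominator $\lvert\bS(k,r-1)\rvert$ (equivalently $\lvert\bS(k,\lfloor(d-1)/2\rfloor)\rvert$); no argument can close the gap to the printed denominator $\lvert\bS(k,t+1)\rvert$ for general $r$.
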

\subsection {Singleton Bound}
To derive a Singleton-type bound for a lattice-based code we define a notion of puncturing. Henceforth, an $(n,\,k,\, d)$ code of size $N$ is referred to as an $(n,\, k,\, N,\, d)$ code. A punctured code ${\Cc}'$ is obtained from $\Cc$ thus:
\begin{rlist}
\item Replace $\Pc$ by a subset ${\Pc}' \subset \Pc$ of cardinality $n -1$. 
\item Replace any codeword $\sbf = \{ p_1 , p_2 , \ldots , p_k \} \subset \Pc $ by the set ${\sbf}' = \sbf \cap {\Pc}'\,$ if $\,\lvert\, {\sbf}'\rvert = k-1$; otherwise replace $\sbf$ by any $(k-1)$-subset ${\sbf}' \subset \sbf$.
\end{rlist}
The above ``puncturing" operation follows the K\"{o}tter-Kschischang formulation for subspace codes and has been applied in \cite{AG} as well, where $ k $-subsets of prime ideals of a Noetherian ring form the constant weight codewords.
\begin{theorem}
If $\Cc$ is a constant weight lattice code of type $(n,\, k,\,N,\, d), \, d > 2,\,$ on a set $\Pc $ of constituent elements of a lattice with unique decomposition, then the code ${\Cc}'$ on the set ${\Pc}'$ obtained by the above puncturing operation is a code of type $(n-1,\, k-1,\,N,\, d')$, where $d \geq d' \geq d -2$.   
\begin{proof}
Let ${\sbf}'_1, {\sbf}'_2 \in {\Cc}'$ be punctured codewords obtained from $\sbf_1, \sbf_2 \in \Cc$. Then we consider the following cases:
\begin{arlist}
\item Let ${\sbf}'_i = \sbf_i \cap {\Pc}',\, i =1,\,2$. Then we can write $\sbf_1 = \{ p_1, \cdots, p_{k-1}, p \}, \, \sbf_2 = \{ q_1, \cdots, q_{k-1}, p \}, $ where $\, \{ p \}= \Pc \backslash \, {\Pc}'$. Therefore ${\rm d_s}({\sbf}'_1,\, {\sbf}'_2) \geq d -2$.

\item Let $ {\sbf}'_1 = \sbf_1 \cap{\Pc}',\,{\sbf}'_2 = \{ q_1, \cdots , q_{k-1} \}, \, \sbf_1 = \{ p_1, \cdots, p_{k-1}, p_k \}, \,\sbf_2 = \{{\sbf}'_2\} \cup \{ q_k \}$ where we have $\{ p_k \} = \Pc \backslash  {\Pc}'$ and $ q_k \in  {\Pc}'$. Then as ${\rm d_s}(\sbf_1,\, \sbf_2) = d$, it follows that $\lvert \,({\sbf}'_1 \, \Delta \, {\sbf}'_2)\,\rvert \geq  d -2$.

\item Let $\sbf_1,\,\sbf_2$ be as above and let ${\sbf}'_1, \,{\sbf}'_2$ be obtained by dropping one arbitrary element from $\sbf_1,\, \sbf_2$ respectively. If ${\rm d_s}(\sbf_1,\, \sbf_2) = d$, then ${\rm d_s}({\sbf}'_1,\, {\sbf}'_2) \geq d -2$.

\end{arlist}
Further, for any two distinct codewords in the original code $ \Cc$, say, $\sbf_1 = \{ p_1, \cdots, p_{k-1},$ $ p_k \},\,\, \sbf_2 = q_1, \cdots , q_{k-1}, q_{k} \}$, we have $\{ p_1, \cdots, p_{k-1} \} \neq \{ q_1, \cdots ,q_{k-1} \}$ as ${\rm d_s}(\sbf_1,\,\sbf_2) \geq d > 2$. Hence $\lvert \, \Cc\, \rvert = \lvert \, {\Cc}'\, \rvert$.
\end{proof}
\end{theorem}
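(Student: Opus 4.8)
The plan is to establish three facts about $\Cc'$: that its codewords are $(k-1)$-subsets of a set of size $n-1$ (immediate from the construction), that the puncturing map $\sbf \mapsto \sbf'$ is injective so that $\lvert \Cc' \rvert = \lvert \Cc \rvert = N$, and that its minimum distance $d'$ satisfies $d-2 \le d' \le d$. Everything rests on one elementary fact about symmetric distance: deleting a single element $x \in A$ changes $\lvert A \,\Delta\, B \rvert$ by exactly $\pm 1$ — it drops by one when $x \notin B$ (so $x$ lay in the symmetric difference) and rises by one when $x \in B$ (so $x$ lay in $A \cap B$). Consequently, removing one element from each of two sets alters their symmetric distance by at most $2$.

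For the lower bound I would fix two distinct codewords $\sbf_1, \sbf_2 \in \Cc$, so ${\rm d_s}(\sbf_1, \sbf_2) \ge d$, and split according to whether the removed point $p$, where $\{p\} = \Pc \setminus \Pc'$, lies in both, in exactly one, or in neither of them — precisely cases (1)--(3). In every case $\sbf_1', \sbf_2'$ arise by deleting one element from each of $\sbf_1, \sbf_2$, so the elementary fact gives at once ${\rm d_s}(\sbf_1', \sbf_2') \ge {\rm d_s}(\sbf_1, \sbf_2) - 2 \ge d-2$, and taking the minimum over all pairs yields $d' \ge d-2$. It is worth recording the finer behaviour used later: in case (1) a common point is deleted from both words, so the distance is unchanged; in case (2) the forced deletion of $p$ removes a point of the symmetric difference and lowers the distance by one, while the remaining arbitrary deletion shifts it by at most one more.

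Size preservation is exactly where the hypothesis $d > 2$ is used. Since symmetric distances between equal-weight sets are even, $d > 2$ forces $d \ge 4$, and the bound just proved gives ${\rm d_s}(\sbf_1', \sbf_2') \ge d-2 \ge 2 > 0$ for any two distinct codewords. Hence no two distinct codewords can be identified under puncturing, the map $\sbf \mapsto \sbf'$ is a bijection $\Cc \to \Cc'$, and $\lvert \Cc' \rvert = N$. This is the same observation as the closing line of the statement: two $k$-subsets at distance at least $4$ cannot collapse to a single $(k-1)$-subset.

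The delicate direction is the upper bound $d' \le d$, and I expect it to be the main obstacle. For the pair realizing ${\rm d_s}(\sbf_1, \sbf_2) = d$ in $\Cc$, cases (1) and (2) keep its image at distance $\le d$ automatically, but in case (3) both deletions are arbitrary and, if both happen to strip points from $\sbf_1 \cap \sbf_2$, the distance can rise by $2$. Since $d \ge 4$ guarantees $\lvert \sbf_1 \setminus \sbf_2 \rvert = \lvert \sbf_2 \setminus \sbf_1 \rvert \ge 2$, one can always delete from the symmetric difference rather than from the intersection, so that the image of a minimizing pair stays at distance at most $d$ and hence $d' \le d$. The genuinely non-routine point is to carry out these arbitrary choices \emph{consistently} across the whole code — the deletion chosen for one codeword must simultaneously respect every pair it participates in — so this is the step I would treat most carefully.
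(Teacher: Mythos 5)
Your argument for the two facts the paper actually establishes --- the lower bound $d' \geq d-2$ and the preservation of the code size --- is correct and follows essentially the same route as the paper: the same three cases according to whether the removed point lies in both, exactly one, or neither of $\sbf_1, \sbf_2$, each handled by the observation that deleting one element from a set shifts the symmetric distance by exactly $\pm 1$. Your packaging is in fact cleaner: stating the $\pm 1$ deletion fact once and invoking it uniformly replaces the paper's three separate verifications, and your injectivity argument (distances between equal-weight sets are even, so $d > 2$ forces $d \geq 4$, hence ${\rm d_s}({\sbf}'_1, {\sbf}'_2) \geq 2 > 0$) is more rigorous than the paper's remark that the ``first $k-1$ elements'' must differ, which does not literally describe the punctured codewords in cases (2) and (3). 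Where you and the paper part ways is the upper bound $d' \leq d$: the paper asserts it in the theorem statement but its case analysis never addresses it, while you correctly identify it as the delicate point. Your suspicion is justified --- with the puncturing operation as defined (``replace $\sbf$ by \emph{any} $(k-1)$-subset''), the bound can genuinely fail: for $\sbf_1 = \{p_1,p_2,p_3,p_4\}$, $\sbf_2 = \{p_1,p_2,p_5,p_6\}$ at distance $4$, deleting $p_1$ from the first and $p_2$ from the second yields punctured words at distance $6$. So a consistent choice of deletions (always from the symmetric difference of a minimizing pair, as you propose) would be needed to salvage $d' \leq d$, and the global consistency across all pairs that you flag is a real obstruction. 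The saving grace, which you could state explicitly to close the argument, is that the upper bound is never used: the Singleton bound that follows needs only $d' \geq d-2$ (so that repeated puncturing keeps the distance $\geq 2$ and hence the size constant), so the theorem should properly be read as claiming only the type $(n-1, k-1, N, d')$ with $d' \geq d-2$.
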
 
Now we state a Singleton-type bound for constant weight codes over a lattice.
\begin{theorem}
The cardinality of an $(n,\, k,\,N,\, d)$ code $\Cc$ based on a set $\Pc $ of constituent elements of a lattice with unique decomposition, with minimum distance $d > 2$, has the following upper bound:
\begin{equation*}
N \leq \displaystyle{\binom {n - (d - 2)/2}{ {\rm max}\{k, n - k \}}}.
\end{equation*}
\begin{proof}
Assume both $k,\, n-k \geq (d-2)/2$.\\
If $\Cc$ is punctured as described above $(d-2)/2$ times in succession, a code ${\Cc}_0$ is obtained which is of type $(n -(d-2)/2,\, k - (d-2)/2,\,N,\, d_0)$, where $d_0 \geq 2$. Each codeword of ${\Cc}_0$ consists of a set of $(k-(d-2)/2)$ of a subset ${\Pc}_0$ of the constituent elements, with cardinality $(n- (d-2)/2)$. Hence the size of ${\Cc}_0$ is upper bounded by the number of distinct choices of subsets possible:
\begin{equation*}
\lvert {\Cc}_0 \rvert \leq \displaystyle{\binom {n - (d - 2)/2}{k - (d - 2)/2}}= \displaystyle{\binom {n - (d - 2)/2}{n - k }}. 
\end{equation*}
Similarly, we could begin with the corresponding type $(n ,\, n -k,\,N,\, d_0)$ code ${\Cc}^{\perp}$ and obtain the corresponding $(n -(d-2)/2,\, n -k - (d-2)/2,\,N,\, d_0)$ code ${{\Cc}_0}^{\perp}$ which will have cardinality bounded above by:
\begin{equation*}
\lvert{{\Cc}_0}^{\perp} \rvert \leq \displaystyle{\binom {n - (d - 2)/2}{k }}. 
\end{equation*}
The theorem follows.
\end{proof}
\end{theorem}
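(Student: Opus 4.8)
The plan is to build the bound by iterating the puncturing operation established in the preceding theorem, which for any code of type $(n,k,N,d)$ with $d>2$ produces a code of type $(n-1,k-1,N,d')$ with $d\geq d'\geq d-2$, and crucially preserves the code size. Since the symmetric distance between two $k$-subsets is always even, $d$ is even and $(d-2)/2\in\bN$, so I would puncture $\Cc$ exactly $(d-2)/2$ times in succession, assuming as in the statement that both $k,\,n-k\geq(d-2)/2$.

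First I would track the parameters through the iteration. After $j$ puncturings the ground set has size $n-j$, each codeword has weight $k-j$, and the minimum distance is at least $d-2j$. Before the final ($j=(d-2)/2$) step the distance is still at least $4>2$, so the hypothesis $d>2$ of the puncturing theorem holds at every stage and the size stays equal to $N$ throughout. The resulting code $\Cc_0$ therefore has type $(n-(d-2)/2,\,k-(d-2)/2,\,N,\,d_0)$ with $d_0\geq2$, and its $N$ codewords are distinct $(k-(d-2)/2)$-subsets of a set of size $n-(d-2)/2$. Counting subsets and applying the symmetry $\binom{m}{j}=\binom{m}{m-j}$ with $m=n-(d-2)/2$ gives
\[
N\leq\binom{n-(d-2)/2}{k-(d-2)/2}=\binom{n-(d-2)/2}{n-k}.
\]

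To obtain the complementary estimate I would run the same argument on the \emph{complement code} $\Cc^{\perp}$, whose codewords are the set-complements in $\Pc$ of the codewords of $\Cc$. Complementation is a symmetric-distance isometry, so $\Cc^{\perp}$ has type $(n,\,n-k,\,N,\,d)$; puncturing it $(d-2)/2$ times yields $N\leq\binom{n-(d-2)/2}{n-k-(d-2)/2}=\binom{n-(d-2)/2}{k}$. Combining the two estimates gives $N\leq\min\bigl\{\binom{m}{k},\binom{m}{n-k}\bigr\}$ with $m=n-(d-2)/2$.

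The final step, and the point needing the most care, is identifying this minimum with the single binomial coefficient in the statement. Because $\binom{m}{j}$ is unimodal in $j$, increasing toward $j=m/2$, the smaller of $\binom{m}{k}$ and $\binom{m}{n-k}$ is the one whose index lies farther from $m/2$; a short computation using $m\leq n$ shows that this is always the index $\max\{k,n-k\}$, so that $\min\{\binom{m}{k},\binom{m}{n-k}\}=\binom{m}{\max\{k,n-k\}}$ and the claimed bound follows. I expect the main obstacles to be the bookkeeping of verifying that the minimum distance never drops to $2$ before the last puncturing, so that the size is preserved at every step, together with this concluding unimodality argument pinning down which index realizes the minimum.
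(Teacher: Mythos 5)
Your proposal is correct and follows essentially the same route as the paper: puncture $(d-2)/2$ times, count the resulting $(k-(d-2)/2)$-subsets, repeat on the complement code, and take the smaller of the two binomial bounds. The only difference is that you explicitly fill in details the paper leaves implicit, namely that the distance stays above $2$ at every intermediate puncturing step (so the size is preserved) and the unimodality argument identifying the minimum with $\binom{n-(d-2)/2}{\max\{k,\,n-k\}}$.
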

\subsection{Johnson Bounds}
We now formulate two Johnson bounds for constant weight codes on lattices with unique decomposition. These bounds are similar to the classical Johnson bounds on constant weight binary codes \cite{Florence, Vera} and so we only outline the proofs.
\begin{theorem}[\,(Restricted)\, Johnson Bound 1]\label{th:J1}
Given an $(n,\, k ,\,N,\, 2\delta)$ lattice code over a set $\Pc $ of constituent elements in a lattice with unique decomposition, the size of the code is upper bounded by $N \leq \biggl \lfloor \, \dfrac{\delta n}{k^2 - kn + \delta n} \, \biggr \rfloor$, given $k^2 - kn + \delta n > 0$.   
\end{theorem}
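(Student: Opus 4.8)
The plan is to translate the distance condition into an intersection condition and then run the classical double-counting (convexity) argument for constant weight codes. Since every codeword is a $k$-subset of $\Pc$, for two distinct codewords $c, c' \in \Cc$ we have ${\rm d_s}(c, c') = \lvert c \,\Delta\, c' \rvert = 2\bigl(k - \lvert c \cap c'\rvert\bigr)$. Hence the hypothesis ${\rm d_{min}} \geq 2\delta$ is exactly equivalent to the statement that $\lvert c \cap c'\rvert \leq k - \delta$ for every pair of distinct codewords. This reformulation is the only place the unique-decomposition structure enters: it guarantees that codewords are honest $k$-subsets whose symmetric distance has the stated form.

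First I would count, in two ways, the quantity $S := \sum_{\{c, c'\}} \lvert c \cap c'\rvert$, the sum over unordered pairs of distinct codewords of the size of their intersection. Writing $\lambda_j$ for the number of codewords containing the $j$-th element of $\Pc$ (for $1 \leq j \leq n$), grouping the count by the common element gives $S = \sum_{j=1}^{n} \binom{\lambda_j}{2}$, since each unordered pair of codewords is counted once for every element they share. The intersection bound above then yields the upper estimate $S \leq \binom{N}{2}(k - \delta)$.

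Next I would lower-bound the incidence side using convexity. Because each codeword has exactly $k$ elements, $\sum_j \lambda_j = Nk$, and by Cauchy--Schwarz $\sum_j \lambda_j^2 \geq (Nk)^2/n$; substituting into $\sum_j \binom{\lambda_j}{2} = \tfrac12\bigl(\sum_j \lambda_j^2 - \sum_j \lambda_j\bigr)$ and combining with the upper bound on $S$ gives
\[
\binom{N}{2}(k - \delta) \;\geq\; \frac{1}{2}\left( \frac{N^2 k^2}{n} - Nk \right).
\]
Clearing the factor $\tfrac12$, cancelling one factor of $N$ (legitimate once $N \geq 1$), multiplying through by $n$, and collecting terms rearranges this to $N\bigl(k^2 - kn + \delta n\bigr) \leq \delta n$. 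Under the hypothesis $k^2 - kn + \delta n > 0$ one may divide without reversing the inequality, and since $N \in \bN$ the floor may be taken, delivering the claimed bound.

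The obstacle here is bookkeeping rather than conceptual: the convexity step produces a \emph{lower} bound on $S$ while the distance constraint produces an \emph{upper} bound, and after rearrangement the coefficient of $N$ that emerges is precisely $-(k^2 - kn + \delta n)$, so the sign hypothesis on the denominator is exactly what makes the final division valid. I would also dispose of the degenerate cases $N \leq 1$ separately, where the bound holds trivially (using $k \leq n$, one checks $\delta n \geq k^2 - kn + \delta n$), so that cancelling $N$ causes no loss of generality.
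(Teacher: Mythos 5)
Your proof is correct and follows essentially the same route as the paper's own sketch: double-count the pairwise intersection sum (the paper's $S=\sum_j \nu_j(\nu_j-1)$ over ordered pairs versus your $\sum_j \binom{\lambda_j}{2}$ over unordered pairs), bound it above via the distance condition $\lvert c \cap c'\rvert \le k-\delta$ and below via Cauchy--Schwarz, and rearrange. Your write-up is slightly more careful than the paper's in spelling out the equivalence ${\rm d_s}(c,c')=2(k-\lvert c\cap c'\rvert)$ and in handling the degenerate cases $N\le 1$, but there is no substantive difference in method.
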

\emph{Sketch of proof}: Let $\Pc = \lbrace p_1,\, p_2,\, \cdots,\, p_n \rbrace $. Construct an $N \times n$ array $A = (a_{ij})$ with $(0,\,1)$ entries, where each row corresponds to a codeword and the columns correspond to the elements of $\Pc$. Then $a_{ij} = 1$ if the $j$-th element in $\Pc$ belongs to the $i$-th codeword, else $a_{ij} = 0$. Now define: 
\begin{equation*} 
S = \sum^{N}_{i=1}\sum^{N}_{j=1,\, j \neq i}\sum^{n}_{l=1} a_{il}a_{jl}.
\end{equation*}
As any two codewords have difference at least $2\delta$, it follows that $S \leq (k - \delta)N(N-1)$. Again, if the number of non-zero entries in the $j$-th column is denoted $\nu_j$, we have: $\sum ^{n}_{ j=1} \nu_j = kN$ and $S = \sum ^{n}_{j=1} \nu_j (\nu_j - 1)$. Using the Cauchy-Schwartz inequality,  
\begin{equation*}
n ( \sum ^{n}_{ j=1}{ \nu_j}^2 ) \geq \, ( {\sum ^{n}_{ j=1} \nu_j} ) ^2  = k^2 N^2.
\end{equation*} 
Solving for the maximum value of $N$ gives the stated bound when $k^2 - kn + \delta n > 0$.\\

The following simple corollary to the above theorem points out the interesting fact that the restricted Johnson bound, when achievable, is attained by the codes based on the maximal cliques of the corresponding Johnson graph.
\begin{corollary}\label{cor:mclique}
If an $(n,\, k ,\, 2\delta)$ lattice code achieves the restricted Johnson bound, then it can be realized on a maximal clique in the generalized Johnson graph $J(n,\,k,\, \delta)$.
\end{corollary}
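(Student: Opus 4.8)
The plan is to revisit the equality conditions in the proof of Theorem~\ref{th:J1} and read off their structural consequences. Recall that that argument rests on two inequalities: the combinatorial estimate $S \leq (k-\delta)N(N-1)$, which holds because any two of the $N$ codewords share at most $k-\delta$ constituent elements, and the Cauchy--Schwarz inequality $n\sum_j \nu_j^2 \geq (\sum_j \nu_j)^2 = k^2 N^2$ applied to the column weights $\nu_j$. Chaining $k^2N^2/n - kN \leq S \leq (k-\delta)N(N-1)$ yields $N \leq \delta n/(k^2-kn+\delta n)$. Saying that a code \emph{achieves} the bound means this final inequality is an equality, which squeezes $S$ between its two estimates and so forces both of the constituent inequalities to be tight. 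The first step is to extract the consequence of tightness in the combinatorial estimate.

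First I would observe that equality in $S \leq (k-\delta)N(N-1)$ forces every summand to meet its upper bound, i.e. $\lvert c_i \cap c_j \rvert = k-\delta$ for every pair of distinct codewords $c_i, c_j \in \Cc$. Since each codeword is a $k$-subset of $\Pc$, this is equivalent to ${\rm d_s}(c_i, c_j) = 2\bigl(k-(k-\delta)\bigr) = 2\delta$ for all $i \neq j$; that is, \emph{every} pair of codewords sits at exactly the minimum distance. By the defining adjacency of the generalized Johnson graph $J(n,k,\delta)$, in which two $k$-subsets are adjacent precisely when they share $k-\delta$ elements, this says the codewords are pairwise adjacent, so $\Cc$ is a complete subgraph of $J(n,k,\delta)$.

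It then remains to show this complete subgraph is maximal, i.e. not properly contained in any larger complete subgraph. I would argue by contradiction: if some vertex $v \in \Pc_k \setminus \Cc$ were adjacent to every member of $\Cc$, then $\Cc \cup \{v\}$ would again be a collection of $k$-subsets that are pairwise at distance $2\delta$, hence a legitimate $(n,k,N+1,2\delta)$ lattice code. But Theorem~\ref{th:J1} bounds the size of any such code by the very quantity $\delta n/(k^2-kn+\delta n)$ that $\Cc$ already attains, forcing $N+1 \leq N$, a contradiction. Hence no such $v$ exists, $\Cc$ is a maximal complete subgraph, and the code is realized on a maximal clique of $J(n,k,\delta)$.

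The step I expect to require the most care is the precise reading of ``achieves the bound''. The bound statement carries a floor, so strictly one must assume that the fraction $\delta n/(k^2-kn+\delta n)$ is itself an integer and that $N$ equals it; only then is the whole inequality chain forced to be tight and the intersection condition $\lvert c_i \cap c_j \rvert = k-\delta$ guaranteed. This is exactly the ``certain parameters'' caveat. I would also remark that tightness of Cauchy--Schwarz separately forces all column weights $\nu_j$ to be equal, a regularity condition on how often each constituent element occurs; this is compatible with, but not needed for, the conclusion, since the clique structure follows from tightness of the combinatorial estimate alone.
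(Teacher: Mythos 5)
Your proposal is correct and follows essentially the same route as the paper: both arguments trace equality through the chain $k^2N^2/n - kN \leq S \leq (k-\delta)N(N-1)$ from the proof of Theorem~\ref{th:J1}, concluding that every pair of codewords intersects in exactly $k-\delta$ elements and hence forms a clique in $J(n,\,k,\,\delta)$. You go somewhat further than the paper's own proof, which leaves maximality implicit and does not discuss the floor: your contradiction argument for maximality (adjoining a vertex adjacent to all of $\Cc$ would produce an $(n,\,k,\,N+1,\,2\delta)$ code violating the bound) and your observation that ``achieves the bound'' must mean $N$ equals the unfloored fraction are both accurate and welcome completions of the argument.
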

\begin{proof}
For $N = \dfrac{\delta n}{k^2 - kn + \delta n} $, we have: $ (k - \delta)N(N-1) = (1/n)k^2 N^2 - kN $.\\

With equality in the Cauchy-Schwartz inequality in the proof of Theorem \ref{th:J1}, i.e. setting $ \sum ^{n}_{ j=1}{ \nu_j}^2  = (1/n)k^2 N^2 $, we have $ S = (k - \delta)N(N-1) $. The second equality is achieved when any two rows in the array $ A $ differ in exactly $ 2 \delta $ places, i.e. the corresponding codewords are adjacent in the Johnson graph $J(n,\,k,\, \delta)$.
\end{proof}
The following theorem is directly adapted from a refinement on the restricted Johnson bound given in \cite{Florence}, which serves as a further check on the attainability of the above bound. The modification involved is in minimizing the sum $ \sum ^{n}_{ j=1}{ \nu_j}^2 $, subject to the constraints that all $ \nu_j $'s are integers and the sum $\sum ^{n}_{ j=1} \nu_j$ is the number of non-zero entries in the array $ A $. In Theorem \ref{th:J1} this was done by directly applying the Cauchy-Schwartz inequality.
\begin{theorem}\label{th:sharpJ1}
For an $(n,\, k ,\,N,\, 2\delta)$ lattice code if there exist integers $ a, b $ satisfying: $ kN = na + b, \, 0\leq b< n $, then: $ na(a-1)+ 2ab \leq (k - \delta)N(N-1) $.
\end{theorem}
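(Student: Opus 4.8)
The plan is to reuse the double-counting setup of the proof of Theorem \ref{th:J1}, but to replace the Cauchy--Schwartz step by an exact integer minimization of $\sum_{j=1}^n \nu_j^2$. First I would form the same $N \times n$ incidence array $A = (a_{ij})$ and the same quantity $S = \sum_{i=1}^N \sum_{j \neq i} \sum_{l=1}^n a_{il}a_{jl}$, and recall the two facts established there. Since any two distinct codewords agree in at most $k - \delta$ positions, $S \leq (k - \delta)N(N-1)$. Counting instead by columns, and writing $\nu_j$ for the weight of the $j$-th column, one has $S = \sum_{j=1}^n \nu_j(\nu_j - 1) = \sum_{j=1}^n \nu_j^2 - kN$, together with the total-count identity $\sum_{j=1}^n \nu_j = kN$.

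The core of the argument is to minimize $\sum_{j=1}^n \nu_j^2$ over all non-negative integer vectors $(\nu_1, \ldots, \nu_n)$ with the fixed sum $kN = na + b$, $0 \leq b < n$. I would establish by a standard smoothing (convexity) argument that the minimum is attained when the column weights are as nearly equal as possible, that is, $b$ of them equal to $a + 1$ and the remaining $n - b$ equal to $a$. Indeed, if some pair satisfied $\nu_p \geq \nu_q + 2$, then decrementing $\nu_p$ and incrementing $\nu_q$ preserves the sum while changing $\sum_j \nu_j^2$ by $-2(\nu_p - \nu_q) + 2 < 0$; hence at any minimizer all entries differ by at most one, forcing the balanced configuration. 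Evaluating there gives the sharp lower bound $\sum_{j=1}^n \nu_j^2 \geq b(a+1)^2 + (n-b)a^2 = na^2 + 2ab + b$.

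Substituting this into $S = \sum_{j=1}^n \nu_j^2 - kN$ and using $kN = na + b$ yields $S \geq na^2 + 2ab + b - (na + b) = na(a-1) + 2ab$. Chaining this with the upper bound $S \leq (k - \delta)N(N-1)$ produces $na(a-1) + 2ab \leq (k - \delta)N(N-1)$, which is precisely the asserted inequality.

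The main obstacle is the integer minimization itself: the continuous optimum furnished by Cauchy--Schwartz in Theorem \ref{th:J1} is $k^2 N^2 / n$, which is in general not realizable by integer column weights, so the entire sharpening rests on the smoothing lemma identifying the balanced integer distribution as the true minimizer. Once that is in hand, the remainder is a direct transcription of the proof of Theorem \ref{th:J1}.
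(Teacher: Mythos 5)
Your proposal is correct and follows exactly the route the paper indicates: it reuses the incidence-array double count from Theorem \ref{th:J1} and replaces the Cauchy--Schwartz step by the integer minimization of $\sum_{j=1}^{n}\nu_j^2$ at the balanced distribution ($b$ columns of weight $a+1$, $n-b$ of weight $a$), which is precisely the modification the paper attributes to the refinement in MacWilliams--Sloane. The paper leaves this as a one-sentence remark, so your smoothing argument and the arithmetic $na^2+2ab+b-(na+b)=na(a-1)+2ab$ simply supply the details it omits.
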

\emph{Remark}: For the $ (7,\,4,\,4) $ lattice code in Section \ref{sec:constwt}, we have $ k^2 - kn + \delta n = 16 - 28 + 14 =2 $, and hence the first (restricted) Johnson bound can be applied to give $ N \leq 7 $. Hence the $ (7,\,4,\,4) $ code constructed from a maximal clique in the corresponding generalized Johnson graph meets the Johnson bound.\\
Finally we present the counterpart of the so-called unrestricted Johnson bound, which does not impose the condition $ k^2 - kn + \delta n > 0 $.
\begin{theorem}[\,(Unrestricted)\, Johnson Bound 2]\label{th:J2}
The size of an $(n,\, k ,\,N,\, 2\delta)$ lattice code over a set $\Pc$ of constituent elements of a lattice with unique decomposition, with cardinality $\lvert \Pc \rvert = n$, is upper bounded by 
\begin{equation*}
N \leq \biggl \lfloor \, \dfrac{n}{k} \,\biggl \lfloor \, \dfrac{ n-1}{k-1} \cdots \, \biggl \lfloor \, \dfrac{ n-(k- \delta)}{\delta} \, \biggr \rfloor \cdots  \, \biggr \rfloor \, \biggr \rfloor. 
\end{equation*}.
\end{theorem}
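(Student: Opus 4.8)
The plan is to mimic the classical proof of the unrestricted Johnson bound for constant weight codes by an iterated \emph{shortening} (residual code) argument, exploiting the fact that each codeword is a $k$-subset and that the symmetric distance behaves well under deletion of a common element. For brevity write $N(m,\,w,\,2\delta)$ for the maximum size of any $(m,\,w,\,2\delta)$ lattice code; the claim is that $N(n,\,k,\,2\delta)$ is bounded above by the displayed nested floor.

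First I would establish the single shortening step
\begin{equation*}
N(m,\,w,\,2\delta) \leq \biggl \lfloor \dfrac{m}{w}\, N(m-1,\,w-1,\,2\delta) \biggr \rfloor .
\end{equation*}
Fix an optimal code $\Cc$ on $\Pc = \lbrace p_1,\, \ldots,\, p_m \rbrace$ and, for each $i$, set $\Cc_i = \lbrace\, c \backslash \lbrace p_i \rbrace : c \in \Cc,\ p_i \in c \,\rbrace$. If $c,\, c'$ both contain $p_i$ then $(c\backslash \lbrace p_i \rbrace)\,\Delta\,(c'\backslash \lbrace p_i \rbrace) = c\,\Delta\,c'$, so deleting the common element preserves the symmetric distance; hence each $\Cc_i$ is an $(m-1,\,w-1,\,2\delta)$ code and $\lvert \Cc_i \rvert \leq N(m-1,\,w-1,\,2\delta)$. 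Since $c \mapsto c\backslash \lbrace p_i \rbrace$ is injective on the codewords containing $p_i$, double counting incidences between codewords and elements of $\Pc$ yields $\sum_{i=1}^m \lvert \Cc_i \rvert = \sum_{c \in \Cc} \lvert c \rvert = w\,\lvert \Cc \rvert$, whence $w\,\lvert \Cc \rvert \leq m\, N(m-1,\,w-1,\,2\delta)$; as $\lvert \Cc \rvert$ is an integer, the floor may be inserted.

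Next I would identify the base case reached after $k-\delta$ successive shortenings, namely an $(n-(k-\delta),\,\delta,\,2\delta)$ code. For two $\delta$-subsets $A,\, B$ one has ${\rm d_s}(A,\,B) = 2\delta - 2\lvert A \cap B \rvert$, so the requirement ${\rm d_s} \geq 2\delta$ forces $A \cap B = \varnothing$; the codewords are pairwise disjoint $\delta$-subsets, and the maximum number of these in a set of size $n-(k-\delta)$ is $\lfloor (n-(k-\delta))/\delta \rfloor$. Thus $N(n-(k-\delta),\,\delta,\,2\delta) = \lfloor (n-(k-\delta))/\delta \rfloor$, which is precisely the innermost term of the claimed bound.

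Finally I would iterate the shortening step $k-\delta$ times, descending from $(n,\,k,\,2\delta)$ through the parameters $(n-j,\,k-j,\,2\delta)$ to the base case, nesting the floors at each stage. The only point needing care is the bookkeeping that the floors compose correctly: because the map $x \mapsto \lfloor (m/w)\,x \rfloor$ is nondecreasing for $m/w > 0$ and $\lfloor y \rfloor \leq y$, an established inequality $N(m-1,\,w-1,\,2\delta) \leq \lfloor \cdots \rfloor$ may be substituted inside the outer floor without loss. This assembly of the tower of floors — and the verification that monotonicity legitimizes each substitution — is the main, though routine, obstacle; once it is in place the displayed expression drops out directly.
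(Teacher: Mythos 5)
Your proposal is correct and follows essentially the same route as the paper's (sketched) proof: the shortening step $w\,\lvert\Cc\rvert \le m\,N(m-1,\,w-1,\,2\delta)$ obtained by double counting incidences between codewords and elements of $\Pc$ is exactly the paper's recursion $k\,\mathcal{A}(n,\,k,\,2\delta) \le n\,\mathcal{A}(n-1,\,k-1,\,2\delta)$, iterated down to the same nested-floor expression. You are in fact somewhat more careful than the paper, which omits both the base case $N(n-(k-\delta),\,\delta,\,2\delta)=\lfloor (n-(k-\delta))/\delta\rfloor$ and the monotonicity check that justifies composing the floors.
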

\emph{Sketch of proof}: Denote the maximum size of an $(n,\, k ,\, 2\delta)$ lattice code by $\mathcal{A}(n,\, k ,\, 2\delta)$. Assume we have a code of size $\mathcal{A}(n,\, k ,\, 2\delta)$ and construct the array $A$ from the codewords as in the proof of Theorem \ref{th:J1}. Consider the $(n-1,\, k-1 ,\, 2\delta)$ code obtained by deleting one of the elements of $\Pc$. It follows that the size of this code is $\leq \mathcal{A}(n-1,\, k-1 ,\, 2\delta)$ and we get one such code by the deletion of each of the $n$ elements of $\Pc$ in turn. Counting the number of non-zero entries of $A$ and comparing, we get the recursive relation: $k \, \mathcal{A}(n,\, k ,\, 2\delta) \leq n \, \mathcal{A}(n-1,\, k-1 ,\, 2\delta)$. Iterating the relation 
\begin{equation*}
\mathcal{A}(n,\, k ,\, 2\delta) \leq \biggl \lfloor (n/k)\,\mathcal{A}(n-1,\, k-1 ,\, 2\delta)\biggr\rfloor
\end{equation*}
 the bound follows.\\
 
\emph{Remark}: For a $ (7,\,5,\,2) $ constant weight lattice code, $ k^2 - kn + \delta n = 25 -35 +7 = -3 $, and so, applying the unrestricted Johnson bound we have $ N \leq 21 $. An optimal $ (7,\,5,\,2) $ constant weight code is the largest subgraph of the $ (7,\,5,\,1) $ Johnson graph, with adjacent vertices differing in exactly one element; which means it is actually the entire graph with $ 21 $ vertices. Hence, in this case, the unrestricted Johnson bound is achieved. For the $ (8,\,4,\,2) $ Johnson graph, there are $ 30 $ distinct maximal subgraphs, each with $ 14 $ vertices; this again achieves the unrestricted Johnson bound for an $ (8,\,4,\,4) $ constant weight code.
\subsection*{Optimal Constant Weight Codes}
As discussed before, a set of finite subsets of constituent elements of a suitable lattice is chosen as a constant weight code. Hence a constant weight code so constructed may be viewed as a subgraph of a generalized Johnson graph $J(n,\,k,\, i)$ and the code has minimum distance $ {\rm{d_{min}}} = 2i $; it follows that a \emph{maximal} subgraph in the generalized Johnson graph will be an optimal constant weight code. A \emph{clique} in a graph is a complete subgraph which is not contained in any other complete subgraph. Intuitively, an optimal code on a generalized Johnson graph should contain more members than a clique, as all the members of a clique are exactly at minimum distance from each other. However, it turns out that (Corollary \ref{cor:mclique}) for certain parameters an optimal code on the generalized Johnson graph can be realized as a maximal clique. Some optimal codes were obtained, for small parameters, by computer search implementing a variant of the well-known Bron-Kerbosch algorithm \cite{BK}. Additional codes were obtained by a two-step process: first, creating the generalized Johnson graph of all possible subsets and then searching for maximal cliques/ maximal subgraphs using the open software \textbf{Cliquer}. Some of the results are summarised in the given table.
\begin{table}
\begin{center}
\begin{tabular}{|l|l|l|l|}
\hline
$ (n, k, 2\delta) $ & $ J1 $ & $ J2 $ & $ N_{max} $ \\
\hline
$ (8,4,4) $ & -- & $ 14 $ &$ 30 $\\
$ (8,5,4)^{\ast} $ & $ 8/16 $ & --& $ 840 $\\
$ (9,4,4)^{\ast} $  & --& $ 18/25 $& $ 1708 $\\
$ (9,5,4) $ &$ 18 $ & -- & $ 1800 $\\
$ (9,7,4) $ &$ 4 $ & -- & $ 945 $\\
$ (9,6,6) $ &$ 3 $ & -- & $ 280 $\\
$ (10,3,4) $ & -- & $ 13 $ &$ 373680 $\\
$ (10,7,4)^{\ast} $ & -- & $ 8/22 $ &$ 3600 $\\
$ (10,6,6) $ &$ 5 $ & -- & $ 30240 $\\
$ (10,7,6) $ &$ 3 $ & -- & $ 2800 $\\
\hline
\end{tabular}
\end{center}
\caption{The parameters of some optimal codes obtained using Cliquer}
\end{table}
The columns $ J1 $ and $ J2 $ indicate whether the first (restricted) or the second (unrestricted) Johnson bound is applicable, respectively. A parameter set marked with the asterisk indicates a case where the Johnson bound is not achieved and then the values $ x/y$ in the columns $ J1 $ or $ J2 $ indicate the size of the maximal subgraphs obtained by Cliquer versus the computed bound. The last column $N_{max}$ gives the number of maximal cliques or subgraphs of the given size available. For instance, an optimal $(8,\,6,\,4)$ constant weight code is a maximal subgraph in the generalized Johnson graph $J(8,\,6,\, 2)$ comprising distinct $ 6 $-element subsets of a set of $ 8 $ elements such that every pair of distinct subsets differ in at least $ 2 $ places. There are $ 105 $ maximal subsets of size $ 4 $; choosing any of them will give an optimal $(8,\,6,\,4)$ constant weight code which achieves the first Johnson bound. For example, labelling the set of constituent elements as $ p_0,\, p_1, \cdots ,\, p_6, \, p_7 $, a constant weight code is given by $\lbrace \lbrace p_0,\, p_1,\, p_2, \, p_3, \, p_4,\, p_6 \rbrace , \, \lbrace p_0,\, p_1,\, p_2, \, p_5, \, p_6,\, p_7 \rbrace , \,\lbrace p_0,\, p_1,\, p_3, \, p_4, \, p_5,\, p_7 \rbrace ,$ $ \lbrace p_2,\, p_3,\, p_4, \, p_5, \, p_6,\, p_7 \rbrace \rbrace$.
\section{Lattice Codes for SAF Routing in Random Networks}\label{sec:saf}
In this section we give an application of constant weight codes over lattices with unique decomposition in error and erasure correction for SAF routing in random networks. We consider a single source communicating over a delay-free acyclic network to a single sink, i.e. a single unicast. There is no knowledge of the network topology assumed except for the following: at any intermediate node and at the sink, the number of incoming nodes is at most N.
\subsection{At the Source} 
As in the previous section, we begin with a source alphabet $\Ac$ which is a set of decomposable elements of a suitable lattice $\Lc$. Moreover, we choose each element of the source alphabet such that it admits a (primary or irreducible) decomposition into exactly $k$ constituent elements which do not divide one another. We denote the set of all such constituent elements by $\Pc$; let $\lvert \Pc \rvert = n$ and the source alphabet is uniquely identified with a subset of ${\Pc}_k$, the set of $ k $-subsets of $ \Pc $.\\
We choose a finite field $\bF_{q}$ with cardinality $q > n$ and define an injective map $f: \Pc \rightarrow \bF^{\times}_{q}, \, \, p \mapsto f(p)$, where $\bF_{q}^{\times} = \bF_{q}\backslash \{ 0 \}$. Next we define: 
\begin{equation*}
f^{k}:\Pc_k \rightarrow {(\bF^{\times}_{q})}^{k},\,\,(p_{1}, p_{2},\ldots,p_{k})\mapsto (f(p_{1}),f(p_{2}),\ldots,f(p_{k})).
\end{equation*}
At the source, the encoder performs this mapping and sends out the corresponding element of $(\bF^{\times}_{q})^{k} \subseteq \bF_{q}^{k}$, a row vector over $\bF_{q}$ of length $k$, as a single packet. 
\subsection{At the Intermediate Nodes} 
In one `generation', packets of $k \,\, q$-ary symbols arrive at each intermediate node from its incoming edges. Initially, all the incoming symbols are stored in a single array. If the number of incoming edges at each node is bounded above by N, the maximum length of the array required to store the incoming symbols during one transmission interval is N$k$. The intermediate node now extracts $k$ distinct symbols from the array, forms a packet and transmits it. The extraction of $k$ distinct symbols may be performed by constructing a binary search tree with repeated label rejection on the array of symbols: an algorithm of complexity $\mathcal{O} \left( k \ln k \right)$ \cite{Cor}. In the worst case scenario, the complexity will always be bounded by $\mathcal{O} \left( \text{N}k \right)$. If less than $k$ distinct symbols are extracted at any node during one generation, the node declares a failure and transmission is suppressed from that node into the network.
\subsection{At the Sink}
At the sink, the incoming symbols are again stored in an array and $k$ distinct elements are extracted in a similar manner as in the intermediate nodes. If less than $k$ distinct symbols are extracted at the sink, a $k$-symbol packet is created by adding the required number of zeros. Finally each symbol is operated on by the following map: \begin{equation*}
F^{-1}: \bF_q \rightarrow \Pc, \, x \mapsto 
\begin{cases}
f^{-1}(x)& \text{if}\,\, x \neq 0,\\
0& \text{otherwise}.
\end{cases}
\end{equation*}
From the resulting array, the non-zero elements are selected and the resulting set of constituent elements (of cardinality $\leq k$) is fed into the decoder which gives an estimate of the transmitted $k$-subset of constituent elements based on the symmetric distance metric as discussed in the previous section. Finally, the transmitted decomposable element of the source alphabet is retrieved by computing the meet of the estimated constituent elements.
\subsection{Error and Erasure Correction} To discuss the error and erasure correcting capability of the proposed code, we first define errors and erasures in the context of the proposed scheme. Recovering a complete set of $k$ distinct constituent elements at the sink which is different from the transmitted set constitutes an \emph{error} event, while recovering an incomplete set is an \emph{erasure} event. As discussed above, each codeword of a code is a set of distinct constituent elements of a suitable lattice. It follows that a constant weight code $\Cc$ based on constituent elements can correct upto $t$ errors in absence of erasures if and only if $\frac{1}{2} \,{\rm d_{min}}(\Cc)\, > 2t + 1$, i.e. any two distinct sets of constituent elements belonging to the code differ in at least $2t + 1$ places. However it can detect $t'$ errors if ${\rm d_{min}}(\Cc)\, > 2t' + 1$. But in the absence of errors, the code can correct $e$ erasures if and only if ${\rm d_{min}}(\Cc)\, > 2e + 1$. Hence it follows that the proposed coding scheme has asymmetric error and erasure correction capability - the ``cost'' associated with an error is twice that of an erasure.\\
\noindent For instance, suppose we are using the $(7,\,4,\,4)$ constant weight lattice code (described in Section \ref{sec:constwt}) for SAF routing. It follows that that the code can correct a single erasure in the absence of errors and can detect a single error. Let the set $\left \{ p_2,\, p_4,\, p_6,\, p_7 \right \}$ be the transmitted codeword. If the sink receives a set $\left \{ p_2,\, p_4,\, p_5,\, p_7 \right \}$ there are three contending codewords at the same distance from the received set and the decoder detects an error. If the sink receives an incomplete set with a single erasure, the decoding rule readily yields the transmitted codeword.\\
\emph{Remark}: As each codeword of an $(n,\,k,\,d)$ constant weight lattice code $\Cc$ is transmitted as a set of $k \,\, q$-ary symbols over the network, the normalized rate of the codeword may be defined as $R = \frac{{\rm log}_q(\lvert \Cc \rvert)}{k}$. Since $n$ can at most equal $q -1$, it follows that the code rate is in the interval $[0,\,1)$.
\section{Conclusion}
We have constructed constant weight codes on classes of lattices using the unique decomposition of lattice elements in terms of prime and irreducible elements. Primary element decomposition in Noether lattices is an abstraction of primary ideal decomposition in Noetherian commutative rings and the ideal lattice of a Noetherian ring is a modular Noether lattice. So we have achieved a generalization of the codes constructed in \cite{AG} on the ideals of Noetherian rings. It is seen that the existence of unique irreducible decomposition in lattices is not a special case of primary decomposition as in commutative rings but requires a different set of conditions. However, the unique prime decomposition of ideals in a Dedekind domain is an example of unique irreducible decomposition in lattices as well. Bounds constructed for constant weight codes in \cite{AG} have been restated in the setting of lattices, as the encoding exploits the properties of the Johnson graph on the finite subsets of a set of elements in both cases. We have also reformulated the application of codes based on ideals of Noetherian rings for store-and-forward (SAF) routing in random networks in terms of lattice-based codes.\\
As seen in Section \ref{sec:constwt}, optimal constant weight codes have been obtained as maximal subgraphs in the Johnson graph on the subsets of constituent elements of a suitable lattice. Moreover, when the restricted Johnson bound is achievable, an optimal code can be realized as a maximal clique of the generalized Johnson graph. For a number of small parameters, it has been found that the largest subgraph of the corresponding Johnson graph achieves the unrestricted bound as well. But it is seen that, in general, both the restricted and unrestricted Johnson bounds are loose. Hence a natural question is to relate the maximal clique/maximal subgraph with the sharper bounds. This might be achieved with an algebraic description for the optimal codes so obtained by search for maximal subgraphs on the Johnson graph on the subsets of constituent elements of lattices with unique decomposition. Finally, the construction of constant weight codes on multiplicative lattices presented in this paper have potential application in non-linear network coding for deterministic networks, where the networks can be modelled as suitable multiplicative lattices.
\section*{Acknowledgment}
The author gratefully acknowledges the kind assistance of Dilip P. Patil and Sumanta Mukherjee. Thanks also to Smarajit Das for his advice.

\medskip
\end{document}